\documentclass[10pt]{article}

\usepackage{algorithm}
\usepackage[noend]{algpseudocode}
\usepackage{amsmath}
\usepackage{amssymb}
\usepackage{amsthm}
\usepackage[english]{babel}
\usepackage{complexity}
\usepackage[T1]{fontenc}
\usepackage[a4paper,top=2.5cm,bottom=2.5cm,left=2.5cm,right=2.5cm]{geometry}
\usepackage[utf8]{inputenc}
\usepackage{mathtools}
\usepackage{multidef}
\usepackage{tikz}
\usetikzlibrary{arrows,backgrounds,calc,decorations.pathmorphing,decorations.markings,snakes,shapes}
\usepackage{tkz-graph}
\usepackage{url}
\usepackage{xspace}
\usepackage{xcolor}
\usepackage{fossacs}

\newif\ifqed
\qedtrue
\arxivtrue
\newcounter{lemma}
\let\qqed\qed
\renewcommand{\qed}{\ifqed \qedfalse \else \qedtrue \qqed \fi}
\newcounter{mycount}
\newtheorem{theorem}[mycount]{Theorem}
\newtheorem{lemma}[mycount]{Lemma}

\newtheorem{proposition}[mycount]{Proposition}
\theoremstyle{definition}
\newtheorem{definition}[mycount]{Definition}
\theoremstyle{remark}
\newtheorem{remark}[mycount]{Remark}

\widowpenalty=10000
\clubpenalty=10000
\sloppy

\pagestyle{plain}

\begin{document}

\title{Dynamic Complexity of the Dyck Reachability}

\author{Patricia Bouyer and Vincent Jug\'e \\
LSV, CNRS \& ENS Cachan, Univ. Paris-Saclay, France \\
{\small This work is supported by EU under ERC EQualIS (FP7-308087).}}

\maketitle

\thispagestyle{plain}

\begin{abstract}
Dynamic complexity is concerned with updating the output of a problem
when the input is slightly changed. We study the dynamic complexity of
Dyck reachability problems in directed and undirected graphs, where updates
may add or delete edges.
We show a strong dichotomy between such problems, based on the size
of the Dyck alphabet. Some of them are \PTIME-complete
(under a strong notion of reduction) while the others
lie either in \DYN\FO or in \NL.
\end{abstract}

\section{Introduction}

\paragraph{Dynamic problems and dynamic complexity.} 

In this paper, we focus on the dynamic complexity of
some reachability problems. Standard complexity theory
aims at developing algorithms that, given an input of some problem,
compute an output as efficiently as possible.
Its dynamic variant is focused on algorithms that are capable of
efficiently updating the output after a small change of the
input~\cite{PI97,HI02,WS07}. 
Such algorithms may rely on auxiliary data about the
current instance of the problem, and update it
when the instance is modified.

A well-studied dynamic complexity class is \DYN\FO.
An algorithm is in \DYN\FO if the output and the auxiliary data
can be updated by \FO formulas after a small change of the input.
Variants of \DYN\FO include the class \DYN\FOP,
which allows polynomial-time precomputations,
and \DYN\TCZ, in which updates of the auxiliary data are performed
by \TCZ~circuits.

Consider the problem of reachability in directed graphs,
and update operations that consist in inserting or deleting
edges (one at a time). It was recently proven that this problem
belongs to the class \DYN\FO~\cite{DKMSZ15}, which had been conjectured
for decades.

Furthermore, like static complexity classes, dynamic complexity classes come with
natural notions of reduction. The class \DYN\FO is closed under bounded expansion first-order
reductions (hereafter called \bfo reductions), which are specific
\LOGSPACE reductions (\LOGSPACE is for logarithmic space).
A \bfo reduction from a problem to another one is a first-order mapping from instances
of the first problem to instances of the latter one, such that performing an update
operation on the instance of the first problem amounts to performing a bounded number of
update operations on the instance of the latter problem.
Similarly, the class \DYN\FOP is closed under bounded expansion first-order
reductions with polynomial-time precomputation (hereafter called \bfop reductions).

\paragraph{Reachability problems and language theory.}

Dyck reachability problems lie at the interface between two areas.
On the one hand, language theory is concerned with
handling descriptions of languages,
that is sets of words, with respect to various questions:
Is a language empty, finite or infinite?
What about the intersection or the union of two language?
Does a language contain a given word?
Among the best known and most simple classes of languages
are regular and context-free languages.
On the other hand, reachability problems deal with
the existence of paths in graphs,
and include questions such as:
Does there exist a path between two given vertices?
How long must be such paths?

Dyck reachability problems are focused on the existence
of paths in labeled graphs, whose labels belong to a given \emph{Dyck} language.
Dyck languages are languages of well-parenthesized words and, roughly speaking,
are the most simple context-free languages that are not regular.
The Dyck reachability problem in labeled directed acyclic graphs
was proven to be in \DYN\FO~\cite{WS07}, when considering two types
of update operations on labeled graphs, which are insertion and
deletion of edges. Whether this result extends to all labeled
directed graphs was then an open question.

\paragraph{Our contributions.}

We study this open question, and
we distinguish the Dyck reachability problem in two different ways.
Is the labeled graph directed or undirected?
How many symbols does the Dyck alphabet contain?

We prove that there exists a strong dichotomy between the dynamic
complexity of such problems, based on the size of the Dyck alphabet.
In the case of a unary Dyck alphabet, the Dyck reachability problem
lies in \NL~(non-deterministic logarithmic space), and even lies
in \DYN\FO~in the case of undirected graphs; this contrasts with the
case of binary Dyck alphabets, where we prove that the Dyck
reachability problem is \PTIME-complete under \bfop 
reductions. Furthermore, it is widely believed~\cite{PI97} that no
\PTIME-complete problems under \bfop reductions lie
in classes such as \DYN\FO~or the slightly broader class \DYN\FOP.

\paragraph{Related works.}

From its very inception 20 years ago, dynamic complexity has been a
framework of study for several variants of reachability problems
and language theory problems.
The class \DYN\FO was shown to contain reachability problems
in directed acyclic graphs~\cite{Dong95}, undirected graphs~\cite{PI97} and,
most recently, in all directed graphs~\cite{DKMSZ15};
regular and Dyck languages~\cite{PI97}, then all context-free languages~\cite{GMS12};
Dyck reachability in directed acyclic graphs~\cite{WS07}.

At the same time, finding natural problems that are
\NL- or \PTIME-complete (under \LOGSPACE~reductions) and
belong to low dynamic complexity classes such as
\DYN\FO, \DYN\FOP~or \DYN\TCZ~is an ongoing challenge.
All known \PTIME-complete problems lying in \DYN\FO~rely
on highly redundant inputs, hence may be seen as artificial~\cite{PI97}.
Hence, a notion of \emph{non-redundant projection}~\cite{MSVT94}
was introduced. Non-redundant projections are a special kind
of \PTIME reductions, which contains, in particular, \bfo and
\bfop reductions.

Hence, for every static complexity class $\calC$,
we define \emph{non-redundant} $\calC$-complete problems as
those problems that are $\calC$-complete both under
\LOGSPACE~reductions and under non-redundant projections.
Most canonical \PTIME-complete problems are non-redundant,
hence non-redundancy may be seen as a prerequisite for being a
``natural'' problem.

A breakthrough was the proof that the Dyck reachability problem
in acyclic directed graphs, which is a non-redundant \LOGCFL-complete problem,
belongs to \DYN\FO~\cite{WS07}.
It was then proved in~\cite{MVZ16} that
the reachability problem in labeled acyclic graphs, where path labels
are constrained to belong to a given context-free grammar
(and not only to a Dyck language), is in \Dyn\FO.
We prove here that the results of~\cite{MVZ16}
are unlikely to extend to all labeled graphs, or even to undirected graphs,
even in the simple case of two-letter Dyck languages.
This also allows us to answer negatively a question of Weber and Schwentick,
who asked in~\cite{WS07} whether ``the Dyck reachability problem might
be a non-redundant \PTIME-complete problem that allows efficient updates.''
\ifarxiv\else

\medskip

Complete proofs can be found in the research report~\cite{BG16}.
\fi

\section{Definitions}

\subsection{Dyck reachability problems}

A labeled directed graph is a triple $G = (V,L,E)$ where $V$ is a finite set of vertices,
$L$ is a finite set of labels and $E \subseteq V \times L \times V$ is a finite set of edges.
The graph $G$ is said to be \emph{unlabeled} if $L$ is a singleton set;
in that case, we may directly represent $G$ as a pair $(V,E)$ where $E \subseteq V \times V$.
The graph $G$ is also said to be \emph{undirected} if the relation $E$ is symmetric, i.e. if,
for every edge $(v,\theta,w)$ in $E$, the triple $(w,\theta,v)$ also belongs to $E$.

A path in the graph $G$ is a finite sequence of edges
$\pi = (v_1,\theta_1,w_1) \cdot (v_2,\theta_2,w_2) \cdot \ldots \cdot (v_k,\theta_k,w_k)$
such that $v_{i+1} = w_i$ for all $i \in \{1,\ldots,k-1\}$.
The vertex $v_1$ is called the \emph{source} of $\pi$, and $w_k$ is
called the \emph{sink} of $\pi$.
We also denote by $\lambda(\pi)$ the word $\theta_1 \cdot \ldots \cdot \theta_k$, which
is called the \emph{label} of $\pi$.

Assume that the label set $L$ is of the form
$L = \{\ell_1,\ldots,\ell_n\} \uplus \{\oell_1,\ldots,\oell_n\}$ for some integer $n \geq 1$.
The \emph{Dyck language} associated with $L$ is the context-free language
$\bfD_n$ built over the grammar: $S \to \varepsilon \mid \ell_1 \cdot S \cdot \oell_1 \cdot S
\mid \ldots \mid \ell_n \cdot S \cdot \oell_n \cdot S$,
where $\varepsilon$ is the empty word.
The set $\{\ell_1,\ldots,\ell_n\}$ is said to be the \emph{Dyck alphabet} of that language.

The \emph{$n$-letter Dyck reachability problem} asks whether, given two vertices $s$ and $t$ of $G$,
there exists a path in $G$, with source $s$ and sink $t$, and whose label belongs to the Dyck
language $\bfD_n$ (the actual value of the label set $L$ does not matter, as long as
its elements can be partitioned in $n$ ordered pairs).
The $n$-letter \emph{undirected} Dyck reachability problem is the restriction of that problem
to the case where the underlying graph $G$ is constrained to be undirected.

\subsection{Dynamic complexity}

In this paper, we study the dynamic complexity of Dyck reachability problems.
To that end, we first introduce briefly the formalisms of descriptive and dynamic
complexity here, and refer to~\cite{PI97,Imm99,Hes03} for more details.

Descriptive complexity aims at characterizing positive instances of a
problem using logical formulas. The input is then described as a
logical structure described by a set of $k$-ary predicates
(the~\emph{vocabulary}) over its universe. For example, a graph can be
described by a ternary predicate representing its edges, with the set
of states and of labels (usually identified with $\{1,\ldots,n\}$ for
some $n$) as the universe.  The~problem of deciding whether some state
has at most one outgoing edge can be described by the first-order
formula $\exists x. \forall y. \forall z. (E(x,y) \wedge E(x,z))
\Rightarrow (y=z)$. The~class \FO contains all problems that can be
characterized by such first-order formulas. This class corresponds to
the circuit-complexity class~\ACZ (under adequate
reductions)~\cite{BIS90}.

Dynamic complexity aims at developing algorithms that can efficiently update
the output of a problem when the input is slightly changed,
for example reachability of one vertex from another one in a graph.
We would like our algorithm to take advantage of
previous computations in order to very quickly decide the existence of a path
in the modified graph. 

Formally, a decision problem~$\sfS$ is a subset of the set of
$\tau$-structures $\mathsf{Struct}(\tau)$ built on a vocabulary~$\tau$. In~order to
turn~$\sfS$ into a dynamic problem~$\DYN\sfS$, we~need to define a
finite set of allowed updates.
For instance, we might use a $2$-ary operator $\textsf{ins}(x,y)$ that would insert an edge
between nodes~$x$ and~$y$. For a universe of size~$n$, the
set of update operations forms a finite alphabet, denoted by~$\Sigma_n$.
A~finite word in $\Sigma_n^\ast$ then
corresponds to a structure obtained by applying a 
sequence of update operations of~$\Sigma_n$ to the empty
structure $\calI_n$ over the vocabulary $\tau$.
The~language~$\DYN\sfS_n$ is defined as the set of those words in
$\Sigma_n^\ast$ that correspond to~structures of~$\sfS$, and
$\DYN\sfS$ is the union (over all~$n$) of all such languages.

A~dynamic machine is a uniform family $(M_n)_{n \in \bbN}$ of deterministic finite
automata $M_n= \langle Q_n,\Sigma_n,\delta_n,s_n,F_n \rangle$
over an update alphabet~$\Sigma_n$, with an update transition function $\delta_n$.
The set of states can be encoded as a
structure over some vocabulary~$\tau^{\mathrm{aux}}$, which contains the vocabulary $\tau$,
and corresponds
to a polynomial-size auxiliary data structure. Such a machine solves a
dynamic problem if $\DYN\sfS_n=\calL(M_n)$ for all~$n$. It~is in the
dynamic complexity class~$\calC'\text{-}\DYN\calC$ (or simply~$\DYN\calC$
if~$\calC=\calC'$) if the update transition function and accepting set
can be computed in~$\calC$, while the initial state
can be computed in~$\calC'$. In other
words, solving the initial instance of the problem can be done in $\calC'$,
and after any update of
the input (specified by some letter of $\Sigma_n$), further
calculations to solve the problem on that new instance are restricted
to the class $\calC$. Of course, for the a dynamic complexity
class~$\calC'\text{-}\DYN\calC$ to have some interest, the class $\calC$
should be easier than the static complexity class of the original
problem.

In~this paper, we only consider the case where $\calC=\FO$, and where
$\calC' = \FO$ or $\calC' = \PTIME$, meaning that first-order formulas
will be used to describe how predicates are updated along transitions,
and that we may make use of polynomial-time precomputations.  As a
convention, we will denote the class $\PTIME\text{-}\DYN\FO$ by
$\DYN\FOP$, and we recall that the simple notation $\DYN\FO$ is
for $\FO\text{-}\DYN\FO$.

\subsection{Dynamic reductions}

Dynamic complexity comes with the notion of dynamic
reductions~\cite{PI97}.  Let $\calC$ be a complexity class. A
(static) $\calC$ reduction from a decision problem
$\calP$ to another decision problem $\calQ$ is a mapping in $\calC$
from the instances of $\calP$ to the instances of $\calQ$ that
associates every positive instance of $\calP$ with a positive instance
of $\calQ$, and every negative instance of $\calP$ with a negative
instance of $\calQ$. Standard \PTIME-completeness results
use \LOGSPACE~reductions~\cite{GHR95}.

A \emph{dynamic reduction} from a dynamic problem $\calP$ (with
vocabulary $\tau_1$) to another dynamic problem $\calQ$ (with
vocabulary $\tau_2$) is a mapping from $\mathsf{Struct}(\tau_1)$ to
$\mathsf{Struct}(\tau_2)$ such that:
\begin{itemize}
  \item every positive (respectively, negative) instance of $\calP$
  is mapped to a positive (respectively, negative) instance of $\calQ$;
\item every update on an instance $i_1$ of $\calP$ results in a
  \emph{well-behaved} sequence of updates on the instance $i_2$ of
  $\calQ$ to which $i_1$ is mapped.
\end{itemize}
Dynamic reductions have therefore several parameters: the
complexity class to which the mapping belongs, and the sequences of
updates that are allowed.

The dynamic classes $\DYN\FO$ and $\DYN\FOP$ are respectively closed
under bounded expansion first-order (bfo for short) and bounded expansion
first-order with polynomial-time precomputation
(bfo\textsuperscript{+} for short) reductions~\cite{PI97}.  A dynamic
reduction $\mu$ from $\calP$ to $\calQ$ is bfo\textsuperscript{+} if
it is a \FO~reduction and if every update on an instance $i_1$ of
$\calP$ results in a bounded sequence of \FO~updates on its
image $\mu(i_1)$.  If, furthermore, the empty structure $\calI_1$ is
mapped to a structure $\mu(\calI_1)$ that can be obtained by applying
a bounded sequence of \FO~updates on the empty structure
$\calI_2$, then we say that $\mu$ is bfo.

Note that dynamic reductions can be applied to the class \PTIME~(which
coincides with the class \DYN\PTIME, under the assumption that
updates are one-bit input changes).
So, being \PTIME-hard for bfo\textsuperscript{+}
reductions is a priori stronger than being \PTIME-hard for \LOGSPACE~reductions. 
Furthermore, it is known that the classes of bfo and of
bfo\textsuperscript{+} reductions are closed under composition
and that the circuit value problem is a \PTIME-complete problem for bfo\textsuperscript{+}
reductions~\cite{PI97}.
Hence, every \PTIME~problem to which the circuit value problem is
bfo\textsuperscript{+}-reducible is also \PTIME-complete problem for bfo\textsuperscript{+}
reductions.

\subsection{Main result}

We are now in a position to formally present our main result.

\begin{theorem}
\label{thm:main}
The $1$-letter Dyck reachability problem is in $\NL$, and the
$1$-letter undirected Dyck reachability problem is in $\NL \cap
\DynFO$.  Furthermore, for all integers $n \geq 2$, both the
$n$-letter Dyck reachability problem and the $n$-letter undirected
Dyck reachability problems are $\PTIME$-complete for
bfo\textsuperscript{+} reductions.
\end{theorem}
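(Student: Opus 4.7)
The four claims in the statement decompose into matching upper and lower bounds. Containment in $\PTIME$ is the classical cubic-time CFL-reachability algorithm and is common to all four variants, so the real work lies in the $\PTIME$-hardness for $n \geq 2$ and the refined upper bounds ($\NL$ and $\DynFO$) for $n = 1$.

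For the $1$-letter directed case, the Dyck condition amounts to a single counter, incremented on $($-labels and decremented (only when positive) on $)$-labels. The plan is first to show, by a pumping-and-exchange argument on vertex-counter configurations, that whenever $s \xrightarrow{\bfD_1} t$ there is a witnessing walk whose counter stays below $|V|^{O(1)}$; an $\NL$ machine can then guess such a walk on the fly while keeping the current vertex and the current counter in $O(\log |V|)$ bits. For the undirected variant, I would observe that every vertex incident to both a $($-edge and a $)$-edge carries a nontrivial Dyck self-loop, and from this derive an $\FO$-definable characterization of $\bfD_1$-reachability that bottoms out in ordinary undirected connectivity queries on an auxiliary graph; these queries are maintainable in $\DynFO$ by the Patnaik--Immerman algorithm.

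For $\PTIME$-hardness when $n \geq 2$, I would reduce from the monotone circuit value problem, which is $\PTIME$-complete under bfo\textsuperscript{+} reductions. The reduction uses two bracket pairs $(\,,\,)$ and $[\,,\,]$ and creates two vertices $s_g, t_g$ per gate $g$, the intended invariant being that $s_g \xrightarrow{\bfD_2} t_g$ iff $g$ evaluates to true. An input gate receives an edge $s_g \xrightarrow{(} v \xrightarrow{)} t_g$ exactly when its value is true, so one input flip is just a bounded number of edge insertions or deletions; an OR-gate $g = g_1 \vee g_2$ is wired by edges $s_g \xrightarrow{(} s_{g_i}$ and $t_{g_i} \xrightarrow{)} t_g$ for $i \in \{1,2\}$; and an AND-gate $g = g_1 \wedge g_2$ is wired by the chain $s_g \xrightarrow{(} s_{g_1}$, $t_{g_1} \xrightarrow{[} s_{g_2}$, $t_{g_2} \xrightarrow{]} v \xrightarrow{)} t_g$, so that a Dyck walk from $s_g$ to $t_g$ exists iff Dyck walks exist from $s_{g_i}$ to $t_{g_i}$ for both $i$. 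The construction is first-order in the circuit description, each input flip triggers only $O(1)$ edge updates, and $\bfD_2$ embeds into $\bfD_n$ for every $n \geq 2$, which together give a bfo\textsuperscript{+} reduction.

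For the undirected version I plan to reuse the same gadgets inside a graph whose edge relation is made symmetric. The main obstacle I foresee is soundness, since the reverse traversals of gadgets that become available could in principle produce spurious Dyck witnesses. My strategy is to separate consecutive gadgets by short guard paths whose bracket labels force any backward traversal to start with too many closing brackets of one type, so that such a prefix can never be completed into a Dyck word; a careful induction on the topological depth of the circuit then shows that the only Dyck walks from $s_g$ to $t_g$ are the intended forward ones. Once this soundness lemma is established, the reduction carries over verbatim and yields $\PTIME$-hardness under bfo\textsuperscript{+} reductions in the undirected setting as well, for every $n \geq 2$.
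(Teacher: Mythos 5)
Your upper bounds are essentially the paper's (it cites the one-counter-automaton result for the directed $1$-letter case, and your undirected sketch, once you make the even-length-path condition explicit as in the paper's Proposition~\ref{pro:thm-2-n=1}, reduces to parity-annotated undirected connectivity maintained \`a la Patnaik--Immerman). The genuine gap is your $\PTIME$-hardness gadget for $n \geq 2$: with only two bracket pairs shared by \emph{all} gates, the invariant ``a Dyck walk from $s_g$ to $t_g$ exists iff Dyck walks exist for both children'' is false as soon as gates share children --- and fan-out is unavoidable, since monotone circuit value restricted to formulas is not $\PTIME$-complete. Concretely, take $g = g_1 \wedge g_2$ with $g_1 = u \wedge w$ ($u$ a true input, $w$ a false input, so $g_1$ and hence $g$ are false), $g_2$ a true input, and add elsewhere an OR gate $q_1$ having $u$ as a child and an AND gate $q = q_1 \wedge q_2$ with $q_2 = g_2$. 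All of the following edges exist in your construction, and the walk
$s_g \xrightarrow{(} s_{g_1} \xrightarrow{(} s_u \xrightarrow{(} v_u \xrightarrow{)} t_u \xrightarrow{)} t_{q_1} \xrightarrow{[} s_{g_2} \xrightarrow{(} v_{g_2} \xrightarrow{)} t_{g_2} \xrightarrow{]} v_g \xrightarrow{)} t_g$
has label $(\,(\,(\,)\,)\,[\,(\,)\,]\,)$, a Dyck word, even though $g$ evaluates to false and the walk never certifies $w$ at all: the ascent edge $t_u \xrightarrow{)} t_{q_1}$ lets the path change parent at $u$ and borrow the $[\cdot]$ chain of the unrelated AND gate $q$. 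This cross-talk is precisely what the paper's detour through the \emph{near-Dyck} problem is designed to eliminate: there each vertex $x$ receives a private letter used on the single edge $x \xrightarrow{x} t$, conjunction is enforced by the unique path $\inn(x)$ whose label lists \emph{all} successors $\ol{v_y}$ of $x$, and only afterwards is the unbounded alphabet compressed to two letters via $v_i \mapsto a^i \cdot b \cdot a^{n+1-i}$. Some such per-gate identification, followed by an encoding step, is needed; your two-bracket gadget cannot be repaired locally.

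Even granting a correct directed reduction, your undirected plan is only a gesture at the hard part. Your ``guard paths forcing backward traversals to open too many closing brackets'' is in spirit the paper's lock mechanism (the factors $1 \cdot \ol{0}$ and $0 \cdot \ol{1}$ occur in no Dyck word), but locks only forbid crossing a lock \emph{backwards}; they do not prevent unbounded back-and-forth between consecutive locks, and ruling out spurious witnesses there is where nearly all of the work lies --- the paper needs the parity-of-blocks invariants, the regular-expression fixed point $\varpi$ of Lemmas~\ref{lem:1-0-omega-2}--\ref{lem:1-0-omega-3}, and a projection onto $\bbZ_2 \ast \bbZ_2$ (Lemma~\ref{lem:counter-paths-core}) to control the bracket counters. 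Note also that the paper does not redo a circuit reduction undirectedly: it gives a single bfo\textsuperscript{+} reduction from the directed $2$-letter problem to the undirected one, which is cleaner than re-verifying gadget soundness gate by gate; and your proposed ``induction on topological depth'' would in any case have to handle walks that revisit gadget regions unboundedly often, which depth induction alone does not support.
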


\begin{remark}
  Note that $\NL \cap \DynFO$ is not known to be strictly included in
  $\NL$. Nevertheless, the case of undirected graph appears to be
  ``easier'' than the case of directed graphs in the $1$-letter
  case. Hence, the $\PTIME$-hardness of both cases for alphabets with
  at least two letters appears rather unexpected.
\end{remark}

\section{One-letter (undirected) Dyck reachability problems}

We prove here the first part of Theorem~\ref{thm:main}, that is we
assume $n=1$.
We first observe that the $1$-letter Dyck reachability problem is equivalent to
a standard reachability problem in one-counter automata (without zero-tests),
which is known to belong to \NL~\cite{DG09,HKOW09}.
The $1$-letter undirected Dyck reachability problem is a restriction of the
$1$-letter undirected Dyck reachability problem, hence it is in $\NL$ as well.
Furthermore, we make the following claim.

\begin{proposition}\label{pro:thm-2-n=1}
Let $s$ and $t$ be two distinct vertices of an undirected labeled graph
$G = (V,E,L)$, with $L = \{\ell_1,\oell_1\}$. There exists a Dyck path from $s$ to $t$ in $G$
if and only if:
\begin{itemize}
 \item the set $\{x \in V \mid (s,\ell_1,x) \in E\}$ is non-empty;
 \item the set $\{y \in V \mid (t,\oell_1,y) \in E\}$ is non-empty;
 \item there exists a path of even length from $s$ to $t$ in $G$.
\end{itemize}
\end{proposition}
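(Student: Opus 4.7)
My plan is to handle the two implications separately.

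\emph{Forward direction.} Since $s \neq t$, any Dyck path from $s$ to $t$ is non-empty, and I would use the structure of the grammar for $\bfD_1$ to observe that every non-empty word of $\bfD_1$ has even length, begins with $\ell_1$, and ends with $\oell_1$. The length of the path then yields (iii); its first edge, of the form $(s,\ell_1,x)$, yields (i); and its last edge, of the form $(z,\oell_1,t)$, combined with the symmetry of $E$ in an undirected graph, gives $(t,\oell_1,z) \in E$ and hence (ii).

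\emph{Converse.} The idea is to take the even-length path guaranteed by (iii) and pad it at both ends with short \emph{detours} at $s$ and $t$, extracted from (i) and (ii), until its label becomes a Dyck word. Pick $x$ and $y$ as witnesses of (i) and (ii); by undirectedness, $(x,\ell_1,s)$ and $(y,\oell_1,t)$ also lie in $E$, so the two-edge walks $\sigma = (s,\ell_1,x) \cdot (x,\ell_1,s)$ and $\tau = (t,\oell_1,y) \cdot (y,\oell_1,t)$ are closed walks at $s$ and at $t$ with labels $\ell_1 \ell_1$ and $\oell_1 \oell_1$. Writing $P$ for an even-length path from $s$ to $t$, I would consider, for each pair of non-negative integers $a, b$, the walk $W_{a,b} = \sigma^a \cdot P \cdot \tau^b$ from $s$ to $t$, whose label is the word $\ell_1^{2a} \cdot \lambda(P) \cdot \oell_1^{2b}$.

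\emph{Choice of $a$ and $b$.} The remaining task is to pick $a, b$ so that this label lies in $\bfD_1$. Let $M$ denote the maximum, over all prefixes $v$ of $\lambda(P)$, of the excess of $\oell_1$'s over $\ell_1$'s in $v$, and let $\Delta$ denote that same excess for $v = \lambda(P)$. A direct balance calculation shows that the label of $W_{a,b}$ belongs to $\bfD_1$ exactly when $2a \geq M$ (the prefix condition, which governs heights throughout the $\lambda(P)$ block and hence throughout the $\oell_1^{2b}$ tail) and $2a = 2b + \Delta$ (the global balance condition). Since $\lambda(P)$ has even length, $\Delta$ is even and $M \geq \Delta$, so the choice $a = \lceil M/2 \rceil$ and $b = a - \Delta/2$ produces two non-negative integers satisfying both conditions. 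The one point I would verify with care is the inequality $b \geq 0$, which reduces to $M \geq \Delta$ (itself immediate, as $\lambda(P)$ is one of its own prefixes); the prefix and balance checks themselves are routine parenthesis counting.
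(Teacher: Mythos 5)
Your proof is correct and follows essentially the same route as the paper's: the forward direction is identical, and the converse uses the same padding gadget of length-$2$ closed walks $(s,\ell_1,x)\cdot(x,\ell_1,s)$ and $(t,\oell_1,y)\cdot(y,\oell_1,t)$ around the even-length path. The only difference is bookkeeping: you compute the minimal exponents via the maximal prefix deficit $M$ and the total imbalance $\Delta$, whereas the paper simply takes $a = k$ (half the path length, which trivially dominates $M$) and $b = \kappa$ (the number of $\ell_1$-labeled edges), avoiding the prefix-maximum analysis altogether.
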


\begin{proof}
First, if there exists a Dyck path $\pi = (v_i,\lambda_i,v_{i+1})_{0 \leq i < k}$ with
$s = v_0$ and $t = v_k$, then $\lambda_0 = \ell_1$, $\lambda_{k-1} = \oell_1$, and
the sets $\{0 \le i < k \mid \lambda_i = \ell_1\}$ and $\{0 \le i < k \mid \lambda_i = \oell_1\}$ have the same
cardinality, which proves that $k$ is an even number.

Conversely, assume that $s \neq t$ and that the three conditions of Proposition~\ref{pro:thm-2-n=1} hold.
Let $\pi = (v_i,\lambda_i,v_{i+1})_{0 \leq i < 2k}$ be a path of length $2k$ from $s$ to $t$ in $G$,
for some integer $k \geq 1$.
Let $\kappa$ be the cardinality of the set $\{0 \leq i < 2k \mid \lambda_i = \ell_1\}$ and
let $\ol{\kappa}$ be the cardinality of the set $\{0 \leq i < 2k \mid \lambda_i = \oell_1\}$.
Since $\kappa + \ol{\kappa} = 2k$, we have $\ol{\kappa} - \kappa = 2(k-\kappa)$.

Furthermore, consider vertices $x, y \in V$ such that $(s,\ell_1,x)$
and $(t,\oell_1,y)$ belong to $E$. Since the graph is undirected,
there exist also edges $(x,\ell_1,s)$ and $(y,\oell_1,t)$.
Let $\rho_1$ be the length-$2$ circuit $(s,\ell_1,x) \cdot
(x,\ell_1,s)$, and let $\rho_2$ be the length-$2$ circuit
$(t,\oell_1,y) \cdot (y,\oell_1,x)$.
One checks easily that the path $\rho_1^k \cdot \pi \cdot \rho_2^\kappa$ is a Dyck path in $G$,
where $\rho_1^k$ is the concatenation of $k$ occurrences of $\rho_1$, and
$\rho_2^\kappa$ is the concatenation of $\kappa$ occurrences of
$\rho_2$. 
\end{proof}

Hence, checking whether there exists a Dyck path from $s$ to $t$ in
$G$ amounts to checking whether $s = t$ or, if $s \neq t$, whether the
sets $\{x \in V \mid (s,\ell_1,x) \in E\}$ and $\{y \in V \mid
(t,\oell_1,y) \in E\}$ are non-empty, and whether there exists a path
of even length from $s$ to $t$ in $G$.  The first statements can be
checked directly in \FO, and the latter one can be checked in
\Dyn\FO~\cite{PI97,DKMSZ15}.  This completes the proof of the first part of
Theorem~\ref{thm:main} in the case $n = 1$.

\begin{remark}
  Note that this proof heavily relies on the property that the graph
  be undirected.
  In fact, the $1$-letter Dyck reachability problem (over directed
  graphs) is bfo\textsuperscript{+}-reducible to the problem of
  computing distances in directed graphs, whose membership in \DYN\FO
  or \DYN\FOP is a long-standing open question~\cite{Hes03a,DKMSZ15}.
  The reduction is as follows.

  Given an unlabeled directed graph $G = (V,E)$, equip each edge with
  a label $\ell_1$, and add self-loops (with the label $\ell_1$)
  around each vertex in $V$.  Then, for all vertices $v \in V$, add
  $n$ vertices $(v,1), \ldots, (v,n)$, where $n = |V|$, and add edges
  with the label $\oell_1$ from $v$ to $(v,1)$ and from $(v,i)$ to
  $(v,i+1)$, for all $i$.  It comes at once that the distance (in the
  original graph $G$) from a vertex $s$ to a vertex $t$ is $k$ if and
  only if there exists a Dyck path (in the extended, labeled graph)
  from $s$ to $(t,k)$ but not to $(t,k-1)$.

  Furthermore, the proof of~\cite{Hes03a} showing that distances in graphs can
  be computed in \DYN\TCZ~does not extend to the
  $1$-letter Dyck reachability, whose precise dynamic complexity
  remains therefore unknown.
\end{remark}

\section{Two-letter Dyck reachability problem}

We prove now that, for all integers $n \geq 2$, the $n$-letter Dyck
reachability problem is \PTIME-complete for bfo\textsuperscript{+}
reductions.  

We first introduce two auxiliary problems.

\begin{enumerate}
\item Let $G = (V,E)$ be an unlabeled directed graph, let
  $(V_\wedge,V_\vee)$ be a partition of $V$, and let $s$ and $t$ be
  two marked vertices of $G$. The
  \emph{alternating reachability problem} asks whether $s$ belongs to
  the smallest subset $X$ of $V$ such that all of $\{t\}$, $\{x \in
  V_\vee \mid \exists y \in X \text{ s.t. } (x,y) \in E\}$ and $\{x
  \in V_\wedge \mid \forall y \in V, (x,y) \in E \Rightarrow y \in
  X\}$ are subsets of $X$. \label{def-ARP}

  Note that this problem could be alternatively and equivalently
  defined using the notion of winning state in a two-player
  turn-based zero-sum reachability game. However we choose the above
  definition using a fixed point to avoid defining the notion of
  winning strategies.

\item Let $G=(V,E,L)$ be a labeled directed graph with set of labels
  $L = V \cup \{\overline{v} \mid v \in V\} \cup \{\bullet\}$, where
  $\bullet$ is a fresh label symbol, and let $s$ and $t$ be two marked
  vertices of $G$.  A \emph{near-Dyck word} is an element of the set
  $\bfD'$ built over the grammar: $S \to \varepsilon \mid S \cdot
  \bullet \cdot S \mid v \cdot S \cdot \overline{v} \cdot S$ (for all
  $v \in V$).  The \emph{near-Dyck reachability problem} asks whether
  there exists a path $\pi$ in $G$, with source $s$, sink $t$, and
  whose label belongs to $\mathbf{D}'$.
\end{enumerate}

While it is well-known that the alternating
reachability problem
is \PTIME-hard for standard logarithmic-space~reductions, it is also the
case that it is \PTIME-hard for bfo\textsuperscript{+}
reductions~\cite{PI97}.  Hence, we show in the two next
subsections that there exists a bfo\textsuperscript{+} reduction from
the alternating reachability problem to the near-Dyck reachability
problem, and that there exists a bfo\textsuperscript{+} reduction from
that latter problem to the $2$-letter Dyck reachability problem.  It
will follow that the $2$-letter (and, therefore, the $n$-letter) Dyck
reachability problem is \PTIME-hard for bfo\textsuperscript{+}
reductions.

\begin{remark}
  By carefully adaptating the reductions below, we might
  transform them into bfo reductions only. However, since the
  the alternating reachability problem is only known to be
  \PTIME-complete under bfo\textsuperscript{+} reductions,
  such a transformation would not provide us with
  stronger \PTIME-completeness results for the $n$-letter
  Dyck reachability problem.
\end{remark}

On the other hand, the $n$-letter Dyck reachability problem is clearly in \PTIME,
as witnessed by the following algorithm, which will complete the proof of Theorem~\ref{thm:main}
in the case of the $n$-letter Dyck reachability problem.

\begin{algorithm}
\caption{Finding an $n$-letter Dyck path}\label{guess-dyck-path}
\begin{algorithmic}[1]
\State $S \gets \{(x,x) \mid x \in V\}$
\While{there exists pairs $(u,v) \notin S$ and $(u',v') \in S$ and edges $(u,\lambda,u')$ and $(v',\ol{\lambda},v)$
for some label $\lambda \in \{\ell_1,\ldots,\ell_n\}$}
  \State $S \gets S \cup \{(u,v)\}$
\EndWhile
\State \textbf{return} ($(s,t) \in S$)
\end{algorithmic}
\end{algorithm}

\subsection{From the near-Dyck reachability problem to the Dyck reachability problem}
\label{section:near-dyck-to-dyck}

Let $G=(V,E,L)$ be a labeled directed graph with set of labels $L = V
\cup \ol{V} \cup \{\bullet\}$ (where $\ol{V} = \{\overline{v} \mid v
\in V\}$), and let $s$ and $t$ be two marked vertices of $G$.

We fix the new alphabet $\calL =
\{a,b,\overline{a},\overline{b}\}$ and an arbitrary bijection
$\theta : i \mapsto v_i$ between the set $\{1,\ldots,n\}$ and $V$
(for $n = |V|$).  We will expand the graph so that the letter
$\bullet$ will be encoded using the word $a \cdot \ol{a}$, and each
letter $v_i \in V$ (respectively, $\ol{v_i} \in \ol{V}$) will be represented
by the word $a^i \cdot b \cdot a^{n+1-i}$ (respectively, $\ol{a}^{n+1-i}
\cdot \ol{b} \cdot \ol{a}^i$).

Formally, let $\calG = (\calV,\calE,\calL)$ be the labeled directed
graph defined by:
\begin{itemize}
\item $\calV = V \cup \bigl(V \times \{\bullet\}\bigr) \cup \bigl(V
  \times (V \cup \ol{V}) \times \{0,1,\ldots,n\}\bigr)$;
 \item $\calL = \{a,b,\overline{a},\overline{b}\}$;
 \item $\calE = \calE_1 \cup \calE_2$, where
\end{itemize}
\smallskip
\begin{equation*}
  \begin{split}
  \calE_1 =~& \{x \xrightarrow{a} (x,\bullet) \mid x \in V\} ~ \cup \\
  & \{x \xrightarrow{a} (x,v,0) \mid x,v \in V\} ~ \cup \\
  & \{(x,v,i) \xrightarrow{a} (x,v,i+1) \mid x,v \in V, 0 \leq i \leq n-1, v \neq v_{i+1}\} ~ \cup \\
  & \{(x,v,i) \xrightarrow{b} (x,v,i+1) \mid x,v \in V, 0 \leq i \leq n-1, v = v_{i+1}\} ~ \cup \\
  & \{x \xrightarrow{\overline{a}} (x,\ol{v},n) \mid x,v \in V\} ~ \cup \\
  & \{(x,\ol{v},i+1) \xrightarrow{\overline{a}} (x,\ol{v},i) \mid x,v \in V, 0 \leq i \leq n-1, v \neq v_{i+1}\} ~ \cup \\
  & \{(x,\ol{v},i+1) \xrightarrow{\overline{b}} (x,\ol{v},i) \mid x,v \in V, 0 \leq i \leq n-1, v = v_{i+1}\} \text{ and } \\
  \calE_2 =~& \{(x,\bullet) \xrightarrow{\overline{a}} y \mid x \xrightarrow{\bullet} y \in E\} ~ \cup \\
  & \{(x,v,n) \xrightarrow{a} y \mid x \xrightarrow{v} y \in E\} ~ \cup \\
  & \{(x,\ol{v},0) \xrightarrow{\overline{a}} y \mid x \xrightarrow{\overline{v}} y \in E\},
  \end{split}
\end{equation*}
and in which we mark the vertices $s$ and $t$.

Each sequence of transitions $x \xrightarrow{a} (x,v_{i+1},0) \xrightarrow{a} \ldots
\xrightarrow{a} (x,v_{i+1},i) \xrightarrow{b} (x,v_{i+1},i+1) \xrightarrow{a} \ldots
\xrightarrow{a} \dots (x,v_{i+1},n)$ prepares the encoding of some
edge leaving $x$ with label $v_{i+1}$. If there is some edge $x
\xrightarrow{v_{i+1}} y$ in the original graph, then only one edge
$(x,v_{i+1},n) \xrightarrow{a} y$ needs to be added: this is the role of the edges in
$\calE_2$. We use a similar encoding for edges labeled by $\ol{v_{i+1}}$,
and an even simpler encoding for edges labeled by $\bullet$.

\begin{proposition}
\label{pro:near-dyck-to-dyck}
There exists a near-Dyck path from $s$ to $t$ in $G$ if and only if
there exists a Dyck path from $s$ to $t$ in $\calG$.
\end{proposition}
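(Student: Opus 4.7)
The plan is to show that the alphabet morphism $\phi \colon L^\ast \to \calL^\ast$ defined by $\phi(\bullet) = a\overline{a}$, $\phi(v_i) = a^i b a^{n+1-i}$, and $\phi(\overline{v_i}) = \overline{a}^{n+1-i} \overline{b} \overline{a}^i$ induces a bijection between paths from $s$ to $t$ in $G$ and paths from $s$ to $t$ in $\calG$, under which a path label lies in $\bfD'$ if and only if its image lies in $\bfD_2$.

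First, I would analyze the structure of $\calG$ to obtain a block decomposition of paths. From a vertex $x \in V$, the only outgoing edges (all drawn from $\calE_1$) reach one of the entry gadget vertices $(x,\bullet)$, $(x,v,0)$, or $(x,\overline{v},n)$. Each internal gadget vertex $(x,v,i)$ with $v \in V$ and $0 \le i < n$, or $(x,\overline{v},i+1)$ with $v \in V$ and $0 \le i < n$, admits exactly one outgoing edge, so traversal through the gadget is deterministic until one reaches an exit gadget vertex $(x,\bullet)$, $(x,v,n)$, or $(x,\overline{v},0)$, whose outgoing edges all lie in $\calE_2$ and correspond bijectively to the edges of $G$ with source $x$ and the appropriate label. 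Consequently, every path $\Pi$ in $\calG$ with endpoints in $V$ factors uniquely into concatenated blocks, each block corresponding to one edge of a unique path $\pi$ in $G$; moreover, the label of each block in $\calL^\ast$ equals $\phi(\lambda)$, where $\lambda \in L$ is the label of the associated edge of $G$. Hence the paths in $\calG$ from $s$ to $t$ are in bijection with the paths $\pi$ in $G$ from $s$ to $t$, and they satisfy $\lambda(\Pi) = \phi(\lambda(\pi))$.

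Second, I would establish that $\phi(w) \in \bfD_2$ if and only if $w \in \bfD'$ for every $w \in L^\ast$. The forward direction is an easy structural induction on the grammar of $\bfD'$: both $\phi(\bullet) = a\overline{a}$ and $\phi(v_k)\phi(\overline{v_k}) = a^k b a^{n+1-k} \overline{a}^{n+1-k} \overline{b} \overline{a}^k$ are well-matched over $\calL$. The main obstacle is the converse. I would prove it via the following stack invariant, established by induction on the length of a prefix $w' = \lambda_1 \cdots \lambda_i$ of $w$: if $\phi(w')$ is a valid Dyck prefix, then the stack contents read bottom-to-top are $\phi(v_{j_1}) \phi(v_{j_2}) \cdots \phi(v_{j_m})$, where $v_{j_1}, \ldots, v_{j_m}$ is the sequence of yet-unmatched openers of $w'$ after deletion of the $\bullet$'s. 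The inductive step proceeds by case analysis: $\bullet$ contributes $a\overline{a}$ and leaves the stack unchanged; a letter $v_k$ pushes $\phi(v_k)$ atop the stack; and the critical case is $\overline{v_k}$, where reading $\phi(\overline{v_k}) = \overline{a}^{n+1-k} \overline{b} \overline{a}^k$ requires the top of the stack to begin with exactly $n+1-k$ copies of $a$ followed by a $b$, which by the invariant forces $k = j_m$ and pops $\phi(v_{j_m})$ in its entirety, thereby matching $\overline{v_k}$ with the last unmatched $v_k$. Once $w$ is entirely read, the empty-stack condition forces every closing letter to have been matched correctly, yielding a parse of $w$ in the grammar of $\bfD'$. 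Combining the bijection of the first step with this equivalence gives the proposition.
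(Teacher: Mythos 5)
Your proposal is correct and follows essentially the same route as the paper: a unique factorization of every path of $\calG$ with endpoints in $V$ into gadget blocks (the paper's \emph{nominal} paths), yielding a label-preserving bijection $\pi \mapsto \psi(\pi)$ between paths from $s$ to $t$ in $G$ and in $\calG$, combined with the word-level equivalence $w \in \bfD'$ if and only if $\varphi(w) \in \bfD_2$. The only difference is one of detail: your stack invariant explicitly proves the converse implication ($\varphi(w)$ Dyck implies $w$ near-Dyck) that the paper's proof treats as immediate from the shape of the encoding.
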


\begin{proof}
First, for every pair $(u,v) \in \calV^2$, there exists at most one edge in $\calE$ with
source $u$ and sink $v$. Henceforth, we omit representing labels of edges and of paths in $\calG$.

We further define two mappings $\varphi$ and $\psi$.
The mapping $\varphi$ identifies every label $\lambda \in L$ with a word $\varphi(\lambda) \in \calL^\ast$, as follows:
\begin{equation*}
  \begin{split}
  \varphi(\bullet) =~& a \cdot \ol{a} \\
  \varphi(v_i) =~& a^i \cdot b \cdot a^{n+1-i} \text{ for all $i \in \{1,\ldots,n\}$} \\
  \varphi(\ol{v_i}) =~& \ol{a}^{n+1-i} \cdot \ol{b} \cdot \ol{a}^i \text{ for all $i \in \{1,\ldots,n\}$}
  \end{split}
\end{equation*}
and extends immediately to a morphism from $L^\ast$ to $\calL^\ast$ that
maps every near-Dyck word $w \in \bfD'$ to a Dyck word $\varphi(w) \in \bfD$.
The mapping $\psi$ identifies every edge $e \in E$ with a path $\psi(e)$ in $\calG$, as follows:
\begin{equation*}
  \begin{split}
  \psi(x \xrightarrow{\bullet} y) =~& (x \to (x,\bullet) \to y) \\
  \psi(x \xrightarrow{v} y) =~& (x \to (x,v,0) \to \ldots \to (x,v,n)
  \to y) \text{ for all $v \in V$} \\
  \psi(x \xrightarrow{\ol{v}} y) =~& (x \to (x,\ol{v},n) \to \ldots
  \to (x,\ol{v},0) \to y) \text{ for all $\ol{v} \in \ol{V}$}
  \end{split}
\end{equation*}
and extends immediately to a morphism that maps every path in $G$ to a
path in $\calG$.  The relation $\lambda(\psi(e)) =
\varphi(\lambda(e))$ holds for all edges $e \in E$, and therefore
extends to all paths $\pi$ in $G$.  Hence, a path $\pi$ in $G$ is
near-Dyck if and only if the path $\psi(\pi)$ in $\calG$ is Dyck.

In addition, let us call \emph{nominal} paths in $\calG$ the paths
that belong to the set $\{\psi(e) \mid e \in E\}$, and \emph{generic}
paths in $\calG$ the concatenations of nominal paths.  Nominal paths
are the minimal paths whose source and sink both belong to the subset
$V$ of $\calV$.  Hence, every path $\pi$ from $s$ to $t$ in $\calG$ is
generic, thus $\pi$ is the image by $\psi$ of some path
$\psi^{-1}(\pi)$ from $s$ to $t$ in $G$. 
\end{proof}

The graph $\calG$ is \FO-definable as a function of $G$ and of
the bijection $\theta : \{1,\ldots,n\} \mapsto V$,
and adding\slash deleting an edge in $E$ amounts to
adding\slash deleting exactly one edge in $\calE_2$.
Since $\theta$ can be precomputed in \PTIME,
and due to Proposition~\ref{pro:near-dyck-to-dyck},
the near-Dyck reachability problem is therefore bfo\textsuperscript{+}-reducible
to the Dyck reachability problem.

\subsection{From the alternating reachability problem to the near-Dyck
  reachability problem}
\label{section:alternating-to-near-dyck}

Let $G = (V,E)$ be an unlabeled directed graph, let
$(V_\wedge,V_\vee)$ be a partition of $V$, and let $s$ and $t$ be two marked
vertices of $G$.

We fix the new alphabet $\calL = V \cup \{\ol{v} \mid v \in V\} \cup \{\bullet\}$,
where $\bullet$ is a fresh label symbol, and an arbitrary bijection
$\theta : i \mapsto v_i$ between the set $\{1,\ldots,n\}$ and $V$
(for $n = |V|$). Recall the definition of the set $X$ as a
smallest fixed point page~\pageref{def-ARP}.
We will expand the graph $G$ so that the statement $s
\in X$ be equivalent to the existence of some Dyck path from $t$ to $s$ (in
the expanded graph). 
More precisely, for every vertex $x \in G$, we want the following statements to be
equivalent to each other:
\begin{enumerate}
 \item the vertex $x$ belongs to $X$;
 \item there exists a Dyck path from $t$ to $x$ (in the expanded graph);
 \item there exists a path from $t$ to itself whose label reduces to
   the letter $x$
   (in the expanded graph), where the reduction consists in
     deleting recursively the well-balanced subwords
     $\bullet$ and $v \cdot \ol{v}$ (for $v \in V$).
\end{enumerate}

The equivalence $2 \Leftrightarrow 3$ is ensured by drawing an
edge $x \xrightarrow{x} t$ for every vertex $x \in V$ and
never using the label $x$ for any other edge.
The equivalence $1 \Leftrightarrow 2$ is obtained by induction,
using the following construction (and the already proven equivalence
$2 \Leftrightarrow 3$).
For every vertex $x \in V_\vee$ and every edge $(x,y) \in E$,
we draw an edge $y \xrightarrow{\bullet} x$;
for every vertex $x \in V_\wedge$, we construct one unique path $\inn(x)$
from $t$ to $x$, and we build it so that its reduced label is
$\prod_{y \in \{z \mid (x,z) \in E\}} \ol{v_y}$.

\begin{figure}[ht]
\begin{center}
\begin{tikzpicture}[scale=0.68]
\SetGraphUnit{4}
\tikzset{EdgeStyle/.append style = {>=stealth}}

\Vertex[x=0,y=0,L={$\wedge$}]{1}
\Vertex[x=-2,y=-2,L={$\vee$}]{2}
\EA[L={$\wedge$}](2){3}
\Vertex[x=-2,y=-5,L={$\vee$}]{4}
\EA[L={$\vee$}](4){5}
\node[anchor=south] at (0,0.5) {$s=v_1$};
\node[anchor=south] at (-2,-1.5) {$v_2$};
\node[anchor=south] at (2,-1.5) {$v_3$};
\node[anchor=north] at (-2,-5.5) {$v_4$};
\node[anchor=north] at (2,-5.5) {$t=v_5$};

\node[anchor=south] at (0,1.5) {Graph $G$};

\Edge[style={->}](1)(2)
\Edge[style={->}](1)(3)
\Edge[style={bend right=15,->}](2)(3)
\Edge[style={bend right=15,->}](3)(2)
\Edge[style={->}](2)(4)
\Edge[style={->}](3)(5)
\Edge[style={->}](4)(5)
\Edge[style={bend left=15,->}](5)(2)

\Vertex[x=8,y=0,L={$v_1$}]{1}
\Vertex[x=6,y=-2,L={$v_2$}]{2}
\EA[L={$v_3$}](2){3}
\Vertex[x=6,y=-5,L={$v_4$}]{4}
\EA[L={$v_5$}](4){5}

\Edge[style={->}](3)(2)
\Edge[style={->}](4)(2)
\Edge[style={->}](5)(4)
\Edge[style={->}](2)(5)

\Edge[style={->}](1)(5)
\Edge[style={->}](3)(5)
\Edge[style={bend right=20,->}](4)(5)
\Loop[dist=1cm,dir=SO,style={thick,->}](5)

\node[anchor=south] at (8,-2.1) {$\bullet$};
\node[anchor=south] at (8,-5.1) {$\bullet$};
\node[anchor=east] at (6.1,-3.5) {$\bullet$};
\node[anchor=north east] at (9.1,-2.4) {$v_1$};
\node[anchor=north east] at (8.1,-3.4) {$\bullet,v_2$};
\node[anchor=west] at (9.9,-3.5) {$v_3$};
\node[anchor=north] at (8,-5.5) {$v_4$};
\node[anchor=north] at (10,-5.9) {$v_5$};

\SetGraphUnit{1}

\vs

\Vertex[x=12.5,y=-5,L={~}]{30}
\NO[L={~}](30){31}
\NO[L={~}](31){32}
\NO[L={~}](32){33}
\NO[L={~}](33){34}
\NO[L={~}](34){35}

\Vertex[x=14.5,y=-5,L={~}]{10}
\NO[L={~}](10){11}
\NO[L={~}](11){12}
\NO[L={~}](12){13}
\NO[L={~}](13){14}
\NO[L={~}](14){15}

\Edge[style={bend right=20,->}](5)(10)
\Edge[style={->}](10)(11)
\Edge[style={->}](11)(12)
\Edge[style={->}](12)(13)
\Edge[style={->}](13)(14)
\Edge[style={->}](14)(15)
\Edge[style={bend right=20,->}](15)(1)

\Edge[style={->}](5)(30)
\Edge[style={->}](30)(31)
\Edge[style={->}](31)(32)
\Edge[style={->}](32)(33)
\Edge[style={->}](33)(34)
\Edge[style={->}](34)(35)
\Edge[style={->}](35)(3)

\node[anchor=south] at (11.25,-5.1) {$\bullet$};
\node[anchor=west] at (12.45,-4.6) {$\bullet$};
\node[anchor=west] at (12.45,-3.6) {$\ol{v_2}$};
\node[anchor=west] at (12.45,-2.6) {$\bullet$};
\node[anchor=west] at (12.45,-1.6) {$\bullet$};
\node[anchor=west] at (12.45,-0.6) {$\ol{v_5}$};
\node[anchor=south] at (11.25,-1.05) {$\bullet$};

\node[anchor=north] at (11.25,-5.4) {$v_4$};
\node[anchor=west] at (14.45,-4.6) {$\bullet$};
\node[anchor=west] at (14.45,-3.6) {$\ol{v_2}$};
\node[anchor=west] at (14.45,-2.6) {$\ol{v_3}$};
\node[anchor=west] at (14.45,-1.6) {$\bullet$};
\node[anchor=west] at (14.45,-0.6) {$\bullet$};
\node[anchor=south] at (11.25,0.6) {$\bullet$};

\draw [decorate,decoration=brace] (12,1) -- (13,1) node
[midway,above] {\begin{tabular}{c} path \\ $\inn(v_3)$ \end{tabular}};
\draw [decorate,decoration=brace] (14,1) -- (15,1) node
[midway,above] {\begin{tabular}{c} path \\ $\inn(v_1)$ \end{tabular}};

\node[anchor=south] at (8,1.5) {Graph $\calG$};
\end{tikzpicture}
\end{center}
\caption{Graphs $G$ and $\calG$. A near-Dyck
  path witnessing that $s \in X$ is $v_5 \xrightarrow{v_5}
  v_5 \xrightarrow{\bullet} v_4 \xrightarrow{\bullet} v_2
  \xrightarrow{v_2} v_5 \cdot \inn(v_3) \cdot v_3 \xrightarrow{v_3}
  v_5 \xrightarrow{\bullet} v_4 \xrightarrow{\bullet} v_2
  \xrightarrow{v_2} v_5 \cdot \inn(v_1) \cdot v_1$}
\label{fig:and-or-to-near-dyck}
\end{figure}

Formally, let $\calG = (\calV,\calE,\calL)$ be the labeled directed graph defined by:
\begin{itemize}
 \item $\calV = V \cup \bigl(V_\wedge \times \{0,1,\ldots,n\}\bigr)$;
 \item $\calL = V \cup \{\overline{v} \mid v \in V\} \cup \{\bullet\}$;
 \item $\calE = \calE_1 \cup \calE_2$, where
\end{itemize}
\smallskip
\begin{equation*}
  \begin{split}
  \calE_1 = ~& \{x \xrightarrow{x} t \mid x \in V\} ~ \cup ~
  \{t \xrightarrow{\bullet} (x,0) \mid x \in V_\wedge\} ~ \cup ~
  \{(x,n) \xrightarrow{\bullet} x \mid x \in V_\wedge\} \text{ and } \\
  \calE_2 = ~& \{x \xrightarrow{\bullet} y \mid y \in V_\vee, (y,x) \in E\} ~ \cup \\
  & \{(x,i) \xrightarrow{\overline{v_{i+1}}} (x,i+1) \mid x \in V_\wedge, 0 \leq i \leq n-1, (x,v_{i+1}) \in E\} ~ \cup \\
  & \{(x,i) \xrightarrow{\bullet} (x,i+1) \mid x \in V_\wedge, 0 \leq i \leq n-1, (x,v_{i+1}) \notin E\},
  \end{split}
\end{equation*}
and in which we mark the vertices $t$ and $s$.

The construction is illustrated in Fig.~\ref{fig:and-or-to-near-dyck}.

\begin{proposition}
\label{pro:alternating-to-near-dyck}
Let $X$ be the smallest subset of $V$ such that all of $\{t\}$, 
$\{x \in V_\vee \mid \exists y \in X
\text{ s.t. } (x,y) \in E\}$ and $\{x \in V_\wedge \mid
\forall y \in V, (x,y) \in E \Rightarrow y \in X\}$ are subsets of $X$.
The vertex $s$ belongs to $X$ if and only if there exists a near-Dyck path from $t$ to $s$ in $\calG$.
\end{proposition}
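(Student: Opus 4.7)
The plan is to show $X = X'$ where $X' := \{x \in V \mid \text{some near-Dyck path in } \calG \text{ leads from } t \text{ to } x\}$. The key preliminary observation is that a word over $\calL$ is near-Dyck if and only if its projection onto $V \cup \overline{V}$ obtained by erasing every $\bullet$ is an ordinary Dyck word over $V$; this follows from the grammar of $\bfD'$ by a direct induction. Write $\rho(\pi)$ for this projection of the label of a path $\pi$ in $\calG$. Examining the edges of $\calG$, the only ones that contribute a non-$\bullet$ letter to $\rho$ are the edges $v \xrightarrow{v} t$ (contributing $v$) and the edges $(x, i) \xrightarrow{\overline{v_{i+1}}} (x, i+1)$ with $(x, v_{i+1}) \in E$ (contributing $\overline{v_{i+1}}$). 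In particular, if $i_1 < \cdots < i_k$ enumerate the indices of successors of a vertex $x \in V_\wedge$ in $G$, then $\rho(\inn(x)) = \overline{v_{i_1}} \cdots \overline{v_{i_k}}$.

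For the inclusion $X \subseteq X'$, I check that $X'$ satisfies the three closure conditions from the definition of $X$, so minimality of $X$ forces $X \subseteq X'$. The empty path gives $t \in X'$. If $x \in V_\vee$ and $(x, y) \in E$ for some $y \in X'$, I extend any near-Dyck path from $t$ to $y$ by the edge $y \xrightarrow{\bullet} x$ (which belongs to $\calG$ since $x \in V_\vee$); appending a $\bullet$ leaves $\rho$ unchanged, so the result is still near-Dyck. If $x \in V_\wedge$ with all successors $v_{i_1}, \ldots, v_{i_k}$ in $X'$, I pick near-Dyck paths $\pi_j$ from $t$ to $v_{i_j}$ and form
$$\pi \;=\; \pi_k \cdot (v_{i_k} \xrightarrow{v_{i_k}} t) \cdot \pi_{k-1} \cdots \pi_1 \cdot (v_{i_1} \xrightarrow{v_{i_1}} t) \cdot \inn(x).$$
Its projection $\rho(\pi_k) \cdot v_{i_k} \cdot \rho(\pi_{k-1}) \cdots v_{i_1} \cdot \overline{v_{i_1}} \cdots \overline{v_{i_k}}$ is Dyck, because each $\rho(\pi_j)$ is Dyck and the openers $v_{i_k}, \ldots, v_{i_1}$ nest exactly around the closers $\overline{v_{i_1}}, \ldots, \overline{v_{i_k}}$.

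For the converse $X' \subseteq X$, I argue by strong induction on the length of a near-Dyck path $\pi$ from $t$ to $x$. The case $x = t$ is immediate, so assume $x \neq t$. If $x \in V_\vee$, the only incoming edges to $x$ in $\calG$ have the form $z \xrightarrow{\bullet} x$ with $(x, z) \in E$, so $\pi = \pi'' \cdot (z \xrightarrow{\bullet} x)$ with $\pi''$ shorter and still near-Dyck; the induction hypothesis yields $z \in X$, and the $\vee$-condition places $x$ in $X$. If $x \in V_\wedge$, the unique incoming edge to $x$ is $(x, n) \xrightarrow{\bullet} x$; each $(x, i)$ with $i \geq 1$ admits a unique incoming edge from $(x, i-1)$, and the unique incoming edge to $(x, 0)$ is $t \xrightarrow{\bullet} (x, 0)$. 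Tracing backwards forces $\pi$ to end with a full traversal $\inn(x)$, and I write $\pi = \pi' \cdot \inn(x)$ for some $t$-to-$t$ path $\pi'$.

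The crux is to extract successors of $x$ from $\pi'$. Since $\rho(\pi) = \rho(\pi') \cdot \overline{v_{i_1}} \cdots \overline{v_{i_k}}$ is Dyck, the unique Dyck matching of the trailing closers forces a decomposition $\rho(\pi') = \rho_0 \cdot v_{i_k} \cdot \rho_1 \cdot v_{i_{k-1}} \cdots v_{i_1} \cdot \rho_k$ with each $\rho_j$ a Dyck word. Since the only edges of $\calG$ producing a $v_{i_j}$ letter in $\rho$ are the edges $v_{i_j} \xrightarrow{v_{i_j}} t$, this lifts to a positional decomposition $\pi' = \pi_0 \cdot (v_{i_k} \xrightarrow{v_{i_k}} t) \cdot \pi_1 \cdots (v_{i_1} \xrightarrow{v_{i_1}} t) \cdot \pi_k$, where $\pi_0$ ends at $v_{i_k}$, $\pi_j$ (for $0 < j < k$) ends at $v_{i_{k-j}}$, $\pi_k$ ends at $t$, and $\rho(\pi_j) = \rho_j$. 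Each $\pi_j$ is therefore near-Dyck and strictly shorter than $\pi$, so the induction hypothesis gives $v_{i_\ell} \in X$ for every $\ell$; every successor of $x$ then lies in $X$, and the $\wedge$-condition yields $x \in X$. I expect the principal obstacle to be precisely this lifting step: one must recognise that each non-$\bullet$ letter in $\rho(\pi')$ originates from a distinguished edge of $\calG$, so that the abstract Dyck matching translates verbatim into canonical splitting positions along $\pi'$.
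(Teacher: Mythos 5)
Your proof is correct, and its skeleton coincides with the paper's: the same structural facts about $\calG$ (each letter $v \in V$ labels only the edge $v \xrightarrow{v} t$; every non-empty path from $V$ into a vertex $x \in V_\wedge$ must end with the full traversal $\inn(x)$), the same witness path $\pi_k \cdot (v_{i_k} \xrightarrow{v_{i_k}} t) \cdots (v_{i_1} \xrightarrow{v_{i_1}} t) \cdot \inn(x)$ for $\wedge$-vertices, and the same converse decomposition of a near-Dyck path at the occurrences of the edges $v_{i_j} \xrightarrow{v_{i_j}} t$ matched to the closers contributed by $\inn(x)$, with induction on path length. You differ in two ways, both gains in economy. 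First, for the inclusion $X \subseteq X'$, the paper builds the stages $X_i$, introduces well-ordered sequences and the index $\kappa(x)$, and inducts on $\kappa(x)$; you instead verify that the set $X'$ of near-Dyck-reachable vertices satisfies the three closure conditions and invoke minimality of $X$ --- logically equivalent, but it discards the rank bookkeeping (whose benefit in the paper is to produce explicit witnessing orders). Second, you isolate the characterization that a word over $\calL$ is near-Dyck if and only if erasing the $\bullet$'s yields a Dyck word over $V \cup \ol{V}$, and you spell out the lifting step --- that the Dyck matching of the trailing closers $\ol{v_{i_1}} \cdots \ol{v_{i_k}}$ selects specific occurrences of the edges $v_{i_j} \xrightarrow{v_{i_j}} t$ in $\pi'$, splitting it into strictly shorter near-Dyck segments $\pi_0, \ldots, \pi_k$ with the right endpoints --- which the paper compresses into the single remark that ``since $w_i \xrightarrow{w_i} t$ is the single edge labeled with $w_i$, we can write $\pi^x$ as a concatenation \dots''. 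Your explicit treatment of that step, which you rightly flag as the crux, is precisely what the paper leaves implicit; there is no gap in either direction of your argument.
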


\begin{proof}
  Like in Section~\ref{section:near-dyck-to-dyck}, observe that, for
  every pair $(u,v) \in \calV^2$, there exists at most one edge in
  $\calE$ with source $u$ and sink $v$. Henceforth, we will sometimes
  omit representing labels of edges and of paths in $\calG$.
  Conversely, for all $v \in V$, the edge $v \xrightarrow{v} t$ is
  the only edge in $\calE$ with label $v$.

  Furthermore, for all $x \in V_\wedge$, we denote by $\inn(x)$
  the path $t \to (x,0) \to (x,1) \to \ldots \to (x,n) \to x$.  Every
  non-empty path with source in $V$ and sink $x$ must end with the
  sub-path $\inn(x)$.

Observe that $X$ is inductively defined as $X = \bigcup_{i \geq 0} X_i$,
where $X_0 = \{t\}$ and
\begin{equation*}
  \begin{split}
  X_{i+1} = ~& X_i \cup \{z \in V_\vee \mid \exists y \in X_i
    \text{ s.t. } (z,y) \in E\} ~\cup \\
  & \{z \in V_\wedge \mid
    \forall y \in V, (z,y) \in E \Rightarrow y \in X_i\}.
  \end{split}
\end{equation*}

Then, we say that a sequence $w_0,\ldots,w_k$ of vertices of $G$ is
\emph{well-ordered} if $t = w_0$ and, for all $i \in \{1,\dots,k\}$:
\begin{itemize}
\item if $w_i \in V_\vee$, then $\{z \in V \mid (w_i,z) \in E\} \cap
  \{w_0,\ldots,w_{i-1}\} \neq \emptyset$;
\item if $w_i \in V_\wedge$, then $\{z \in V \mid (w_i,z) \in E\}
  \subseteq \{w_0,\ldots,w_{i-1}\}$.
\end{itemize}

For every $x \in X$, let $\iota(x)$ be the smallest $i$ such that $x
\in X_i$. Define any linear ordering $\prec$ over $X$ such that
$\iota(x) < \iota(y)$ implies $x \prec y$ (in particular the order
between two vertices belonging to $X_i \setminus X_{i-1}$ is
arbitrary).
If $w_0 \prec w_1 \prec \ldots \prec w_k$ are the elements of $X$,
with $k = |X|-1$, then the sequence $w_0,\ldots,w_k$ is well-ordered.
Consequently, every vertex $x \in X$ belongs to a well-ordered
sequence, and we call \emph{index} of $x$, which we denote by
$\kappa(x)$, the smallest integer $k$ such that $x$ belongs to a
well-ordered sequence $w_0,\ldots,w_k$.  Conversely, if some vertex $x
\in V$ belongs to a well-ordered sequence $w_0, \ldots, w_k$, then an
immediate induction proves that $w_i \in X_i$ for all $i \leq k$,
whence $x \in X$.

Consider now some vertex $x \in X$.  We prove by induction on
$\kappa(x)$ that there exits a near-Dyck path from $t$ to $x$ (in the
graph $\calG$).  The result is immediate for $\kappa(x) = 0$, hence we
assume that $\kappa(x) \geq 1$. Let $w_0,\ldots,w_k$ be a well-ordered
sequence to which $x$ belongs, with $k = \kappa(x)$.  By minimality of
$\kappa(x)= k$, we know that $x = w_k$. Whenever $0 \leq i \leq k-1$,
the vertex $w_i$ must belong to $X$, and its index is at most $i$,
hence $\kappa(w_i) \leq i < \kappa(x)$. Hence, by induction hypothesis, there
exists a near-Dyck path $\pi^i$ from $t$ to $w_i$ in $\calG$:
\begin{itemize}
\item If $x \in V_\vee$, there exists a vertex $w_i$ (with $0 \leq i
  \leq k-1$) such that $(x,w_i) \in E$, hence the concatenation of
  $\pi^i$ and of the one-edge path $w_i \xrightarrow{\bullet} x$ is a
  near-Dyck path from $t$ to $x$.
\item If $x \in V_\wedge$, let $w_{1},\ldots,w_{a}$ be the elements of
  the set $\{z \in V \mid (x,z) \in E\}$ with $\theta^{-1}(w_{1}) <
  \ldots < \theta^{-1}(w_{a})$.  The concatenation of the paths
  $\pi^{a}, w_{a} \xrightarrow{w_{a}} t, \pi^{a-1}, \ldots, \pi^{1},
  w_{1} \xrightarrow{w_{1}} t$ and $\inn(x)$ is a near-Dyck path from
  $t$ to $x$ (since $\inn(x)$ is labeled by a word in $\bullet^+ \cdot
  \ol{w_1} \cdot \bullet^* \cdot \ol{w_2} \cdot \bullet^* \ldots
  \bullet^* \ol{w_a} \cdot \bullet^+$).
\end{itemize}

Conversely, let $x \in V$ be a vertex such that there exists a
near-Dyck path from $t$ to $x$ in $\calG$.  Let $\pi^x$ be a shortest
such path, and let $l(x)$ be the length of $\pi^x$.  We prove by
induction on $l(x)$ that $x \in X$.  The result is immediate for $l(x)
= 0$, hence we assume that $l(x) \geq 1$:
\begin{itemize}
\item If $x \in V_\vee$, then $\pi^x$ is the concatenation of some
  path $\rho$ and of the edge $y \xrightarrow{\bullet} x$, for some $y
  \in V$. Hence $\rho$ is a near-Dyck path of length $l(x)-1$,
  which proves that $l(y) < l(x)$, whence $y \in X$ (by induction
  hypothesis) and thus $x \in X$.
\item If $x \in V_\wedge$, then $\pi^x$ must end with $\inn(x)$. The
  label of $\inn(x)$ is a word belonging to $\bullet^+ \cdot
  \ol{w_1} \cdot \bullet^* \cdot \ol{w_2} \cdot \bullet^* \ldots
  \bullet^* \ol{w_a} \cdot \bullet^+$, where
  $\{w_{1},w_{2},\ldots,w_{a}\} = \{y \in V \mid (x,y) \in E\}$ and
  $\theta^{-1}(w_1) < \ldots < \theta^{-1}(w_a)$.
  Since $w_i \xrightarrow{w_i} t$ is the single edge labeled with
  $w_i$, we can write $\pi^x$ as a concatenation of paths $\rho^{w_a},
  w_a \xrightarrow{w_a} t, \rho^{w_{a-1}}, \ldots, \rho^{w_1}, w_1
  \xrightarrow{w_1} t, \rho^{t}, \inn(x)$, where every path $\rho^z$
  is a near-Dyck path from $t$ to $z$ (for $z \in V$).  By induction
  (all such paths $\rho^z$ have length smaller than $l(x)$), it
  follows immediately that all of $w_1,\ldots,w_k$ belong to $X$,
  whence $x \in X$. 
\end{itemize}
\end{proof}

The graph $\calG$ is \FO-definable as a function of $G$, of $t$ and of
the bijection $\theta : \{1,\ldots,n\} \mapsto V$,
and adding\slash deleting an edge $e$ in $E$ amounts to
adding\slash deleting exactly either one or two edges in $\calE_2$.
Since $\theta$ can be precomputed in \PTIME,
and due to Proposition~\ref{pro:alternating-to-near-dyck},
the alternating reachability problem is therefore bfo\textsuperscript{+}-reducible
to the near-Dyck reachability problem.

\section{Two-letter undirected Dyck reachability problem}

We proceed by proving that there exists a \bfop reduction
from the $2$-letter Dyck reachability problem (in directed graphs)
to the $2$-letter undirected reachability problem.

Let $G = (V,E,L)$ be a directed labeled graph, with
$L = \{\ell_1,\ell_2,\oell_1,\oell_2\}$, and let
$s$ and $t$ be two marked nodes of $G$.
In addition, let $\calL = \{0,1,\ol{0},\ol{1}\}$ be another set of labels.

The main difficulty, when working in an undirected graph, is that lots
of cycles are created, generating lots of stuttering in the words
labeling the paths. It is therefore hard to really control where a
path goes just by looking at its label.
In particular, the recipe used in Section~\ref{section:near-dyck-to-dyck}
to reduce the near-Dyck reachability problem to the $2$-letter Dyck reachability problem
cannot be used now, and it is not clear whether simple alternative reductions
from the near-Dyck undirected reachability problem to the $2$-letter undirected
Dyck reachability problem
exist. Below, we prove directly the \PTIME-hardness of the $2$-letter undirected Dyck
reachability. In order to do so, we rely on a rather intricate encoding, where
each part plays an important role.

We denote by $\varphi : L^\ast \mapsto \calL^\ast$ the homomorphism of monoids defined by:
\begin{xalignat*}4
\varphi(\ell_1) &= 0 \cott \BOX{\ol{0} \cott 1} \cott 1 \cott
0 \cott 0 \cott 1 \cott 1 \cott 1 \cott 1 \cott \BOX{\ol{1} \cott 0} &
\varphi(\oell_1) &= \BOX{\ol{0} \cott 1} \cott \ol{1}
\cott \ol{1} \cott \ol{1} \cott \ol{1} \cott \ol{0} \cott \ol{0} \cott
\ol{1} \cott \BOX{\ol{1} \cott 0} \cott \ol{0} \\
\varphi(\ell_2) &= 0 \cott \BOX{\ol{0} \cott 1} \cott 0 \cott
0 \cott 1 \cott 1 \cott 0 \cott 0 \cott 1 \cott \BOX{\ol{1} \cott 0} &
\varphi(\oell_2) &= \BOX{\ol{0} \cott 1} \cott \ol{1}
\cott \ol{0} \cott \ol{0} \cott \ol{1} \cott \ol{1} \cott \ol{0} \cott
\ol{0} \cott \BOX{\ol{1} \cott 0} \cott \ol{0}
\end{xalignat*}
Observe that the words $\varphi(\oell_1)$ and $\varphi(\oell_2)$
are formal inverses of the words $\varphi(\ell_1)$ and
$\varphi(\ell_2)$:
in particular, both the words $\varphi(\ell_1) \cdot \varphi(\oell_1)$ and
$\varphi(\ell_2) \cdot \varphi(\oell_2)$ are Dyck words.

In gray boxes
are locks: \emph{along a Dyck path}, once a lock has been
traveled through, we cannot go back earlier in the encoding,
since this would create a factor $1 \cott \ol{0}$ or $0 \cott \ol{1}$,
which is not a factor of any Dyck word.
We therefore say that a path is \emph{doomed} if it crosses a
lock backwards, thereby having a factor $1 \cott \ol{0}$ or $0 \cott \ol{1}$.
By preventing Dyck paths from having doomed subpaths, locks will allow
us to recover partially the directed character of $G$.

Finally, for every word $w \in \calL^\ast$, we denote by $w_i$ the $i$\textsuperscript{th} letter of $w$.
Let $\calG = (\calV,\calE,\calL)$ be the undirected labeled graph defined by:
\begin{itemize}
 \item $\calV = V \cup V \times L \times V \times \{1,\ldots,11\}$;
 \item $\calE = \calE_{\init} \cup \calE_{\mmid} \cup \calE_{\eend}$, where
\end{itemize}
\begin{equation*}
  \begin{split}
  \calE_{\init} =~& \{x \edge{\varphi(\lambda)_1} (x,\lambda,y,1) \mid (x,\lambda,y) \in E\} \\
  \calE_{\mmid} =~& \{(x,\lambda,y,i-1) \edge{\varphi(\lambda)_i} (x,\lambda,y,i) \mid (x,\lambda,y) \in E, 1 \leq i \leq 11\} \\
  \calE_{\eend} =~& \{(x,\lambda,y,11) \edge{\varphi(\lambda)_{12}} y \mid (x,\lambda,y) \in E\},
  \end{split}
\end{equation*}
and in which we mark the vertices $s$ and $t$.

Like in Sections~\ref{section:near-dyck-to-dyck} and~\ref{section:alternating-to-near-dyck},
we observe an equivalence between the two kinds of
Dyck reachability problems in the graphs $G$ and $\calG$,
which we prove formally in the rest of the section.

\begin{proposition}
\label{pro:dyck-to-undirected-dyck}
  There exists a Dyck path from $s$ to $t$ in $G$ if and only if there
  exists a Dyck path from $s$ to $t$ in $\calG$.
\end{proposition}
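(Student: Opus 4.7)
The plan is to prove the two directions of the equivalence separately.

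The ``only if'' direction is straightforward: extend the map that sends each edge $(x,\lambda,y) \in E$ to the forward traversal $x \to (x,\lambda,y,1) \to \cdots \to (x,\lambda,y,11) \to y$ of the corresponding chain in $\calG$, which has label $\varphi(\lambda)$. Since the words $\varphi(\ell_j) \cdot \varphi(\oell_j)$ are Dyck for $j \in \{1,2\}$, $\varphi$ extends to a morphism of monoids that sends Dyck words over $L$ to Dyck words over $\calL$. Hence every Dyck path from $s$ to $t$ in $G$ lifts to a Dyck path from $s$ to $t$ in $\calG$ with the same endpoints.

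For the converse, the central analytic tool is that no factor of any Dyck word contains $1 \cdot \ol{0}$ or $0 \cdot \ol{1}$: a closing bracket must match the most recently opened one, so each of these consecutive pairs is forbidden. The gray-boxed locks $\ol{0} \cdot 1$ and $\ol{1} \cdot 0$ in every encoding are placed precisely so that traversing them \emph{backward} in the undirected graph $\calG$ reads exactly one of these forbidden factors. In particular, attempting to enter any encoding from its $y$-end and continue into the chain immediately produces a forbidden factor, so no encoding can be traversed entirely backward inside a Dyck path of $\calG$.

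Given any Dyck path $\pi$ from $s$ to $t$ in $\calG$, I would then decompose $\pi$ at its visits to $V$-vertices. Between two consecutive such visits, $\pi$ stays inside a single chain of some edge $(x',\lambda,y') \in E$. A case analysis using the forbidden-factor rule and the lock positions shows that each such segment falls into one of the following types: (a) a bounce that remains on the $x'$-side of the first lock and returns to $x'$; (b) the symmetric bounce around $y'$; or (c) a complete forward traversal from $x'$ to $y'$, possibly with stuttering inside the lock-free zones. Reading off the $G$-edges underlying the (c)-segments produces a walk $\rho$ from $s$ to $t$ in $G$ whose $G$-label I denote $w$.

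The main obstacle is then to show that $w$ is a Dyck word. I would argue that each (a)- and (b)-segment, and each piece of stuttering inside a (c)-segment, contributes a locally Dyck (hence removable) detour to the label of $\pi$, so that erasing all these detours leaves a label equal to $\varphi(w)$ that is still Dyck. It then remains to verify that $\varphi$ reflects Dyck-ness: balance on $\varphi(w)$ forces balance on $w$ because $\varphi(\ell_j)$ and $\varphi(\oell_j)$ have opposite and matching net openings, and direct computation shows that $\varphi(\ell_i) \cdot \varphi(\oell_j)$ fails to be Dyck whenever $i \neq j$, so any type mismatch in $w$ would leave a mismatched factor in $\varphi(w)$. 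Hence $w$ is Dyck, and $\rho$ is the desired Dyck path from $s$ to $t$ in $G$.
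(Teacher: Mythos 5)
Your forward direction is exactly the paper's Lemma~\ref{lem:psi-Dyck}, and your lock observation together with the (a)/(b)/(c) classification of segments correctly reproduces the paper's notion of doomed paths, its Lemma~\ref{lem:nominal-in-P}, and the nominal decomposition of Definition~\ref{dfn:original-path}. The genuine gap is in your main step: the claim that each bounce and each stutter ``contributes a locally Dyck (hence removable) detour'', so that erasing them leaves the label $\varphi(w)$. This is false, precisely because $\calG$ is \emph{undirected}: re-crossing an edge reads the \emph{same} label again, not its formal inverse. A bounce $x \to (x,\lambda,y,1) \to x$ reads $0 \cdot 0$ or $\ol{0} \cdot \ol{0}$, which is not a Dyck word and does not reduce to $\varepsilon$; the cycle $\gamma_1$ of Fig.~\ref{fig:dyck-graph}, with label $0 \cdot 0 \cdot \ol{0} \cdot \ol{0}$, is a Dyck cycle assembled from two bounces neither of which is individually removable, and the stutters inside $\gamma_2$ insert squares $c \cdot c$ (for $c \in \calL$) into the label of an otherwise nominal traversal. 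Consequently the reduced label of a Dyck path $\pi$ in $\calG$ is in general \emph{not} the reduction of $\varphi(w)$, and your closing reduction to ``$\varphi$ reflects Dyck-ness'' never gets off the ground. For the same reason, checking that $\varphi(\ell_i) \cdot \varphi(\oell_j)$ is not Dyck for $i \neq j$ is insufficient: a mismatched pair in $w$ appears in $\lambda(\pi)$ with arbitrary bounce- and stutter-words interleaved, so what must be ruled out is $(\lambda(\calP_{x,\ell_1,y}) \cdot \varpi \cdot \lambda(\calP_{x',\oell_2,y'})) \cap \calQ \neq \emptyset$, and that is exactly where the parity structure of the encodings (properties 1--3 and 3', distinguishing $\ell_1$ from $\ell_2$ by which blocks of a journey have odd length even after arbitrary backtracking) becomes essential --- your sketch never engages with it.

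The paper closes these two holes with machinery for which your proposal has no substitute. First, it computes a regular-expression invariant ($\omega_+$, $\omega_-$, $\varpi_+$, $\varpi_-$, $\varpi$) for the reduced labels of all bounces and traversals, and shows by induction over the nominal ancestor $\rho$ (Lemmas~\ref{lem:1-0-omega-2}, \ref{lem:1-0-omega} and~\ref{lem:1-0-omega-3}, yielding Lemma~\ref{lem:q-init-dyck-init}) that mismatched consecutions produce a forbidden factor and that a negative letter after a Dyck prefix leaves $\calQ_\init$; this gives that $\lambda(\rho)$ reduces to a word over $\{\ell_1,\ell_2\}$ only. Second --- and this is the step your removability claim was silently standing in for --- since the stutter squares do not cancel in the free group, the paper counts by projecting onto $\bbZ_2 \ast \bbZ_2$, where squares \emph{do} vanish: every bounce maps to the identity and every chain traversal to $\gamma^{\pm 1}$, so the Dyck label of $\pi$ forces $\mu(\lambda(\rho)) = 0$ (Lemma~\ref{lem:counter-paths-core}). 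A reduction containing only positive letters and satisfying $\mu = 0$ is empty, whence $\lambda(\rho)$ is Dyck. To repair your proof you would essentially have to re-derive both ingredients; the decomposition alone does not suffice.
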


A first \emph{completeness} result towards proving Proposition~\ref{pro:dyck-to-undirected-dyck} comes quickly.

\begin{lemma}
\label{lem:psi-Dyck}
Let $\rho$ be a Dyck path from $s$ to $t$ in $G$.
There exists a Dyck path from $s$ to $t$ in $\calG$.
\end{lemma}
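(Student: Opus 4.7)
The plan is to define a morphism $\psi$ from paths in $G$ to paths in $\calG$, paralleling the lifting construction of Section~\ref{section:near-dyck-to-dyck}. For each edge $e = (x,\lambda,y) \in E$, let $\psi(e)$ be the length-$12$ path
\[
x \to (x,\lambda,y,1) \to (x,\lambda,y,2) \to \cdots \to (x,\lambda,y,11) \to y
\]
in $\calG$, whose label is precisely $\varphi(\lambda)$, by construction of the edge sets $\calE_{\init}$, $\calE_{\mmid}$ and $\calE_{\eend}$. Extending $\psi$ by concatenation yields a map from paths in $G$ to paths in $\calG$ such that $\lambda(\psi(\pi)) = \varphi(\lambda(\pi))$ for every path $\pi$ in $G$.

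Given the Dyck path $\rho$ from $s$ to $t$ in $G$, the path $\psi(\rho)$ then goes from $s$ to $t$ in $\calG$, and its label is $\varphi(\lambda(\rho))$. It therefore suffices to prove that $\varphi$ maps Dyck words to Dyck words, which I would establish by induction on the grammar $S \to \varepsilon \mid \ell_i \cdot S \cdot \oell_i \cdot S$. The base case $\varphi(\varepsilon) = \varepsilon$ is immediate; for the inductive step, the homomorphism property gives
\[
\varphi(\ell_i \cdot w_1 \cdot \oell_i \cdot w_2) = \varphi(\ell_i) \cdot \varphi(w_1) \cdot \varphi(\oell_i) \cdot \varphi(w_2),
\]
and the induction hypothesis combined with the excerpt's observation that $\varphi(\ell_i) \cdot \varphi(\oell_i)$ is Dyck for $i \in \{1,2\}$ yields the conclusion, provided one knows that inserting a Dyck word in the middle of a Dyck word preserves the Dyck property.

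The only minor obstacle is this last insertion step, since the target alphabet $\calL$ is not the one on which the original Dyck grammar is stated. The cleanest justification is via the matching-bracket characterization of Dyck words in the free monoid with involution: a valid matching of the brackets of $\varphi(\ell_i) \cdot \varphi(\oell_i)$ extends to a valid matching of $\varphi(\ell_i) \cdot D \cdot \varphi(\oell_i)$ for any Dyck word $D$, simply by adjoining the internal matching of $D$. Since this is routine, the substance of Proposition~\ref{pro:dyck-to-undirected-dyck} lies not in the present completeness lemma but in the forthcoming soundness direction, where the role of the locks and of the doomed-path analysis will have to be carefully exploited in order to restrict which paths in the undirected graph $\calG$ can actually carry a Dyck label.
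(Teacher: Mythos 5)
Your proposal is correct and follows essentially the same route as the paper: both define the morphism $\psi$ lifting each edge $(x,\lambda,y)$ to the length-$12$ path through the vertices $(x,\lambda,y,i)$, observe that $\lambda(\psi(e)) = \varphi(\lambda(e))$, and conclude via the fact that $\varphi$ sends Dyck words to Dyck words. The only difference is that you spell out the grammar induction (using that $\varphi(\ell_i)\cdot\varphi(\oell_i)$ is Dyck and that Dyck words are closed under insertion of Dyck factors), which the paper leaves implicit in its earlier remark that $\varphi(\oell_i)$ is the formal inverse of $\varphi(\ell_i)$.
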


\begin{proof}
  First, for every pair $(u,v) \in \calV^2$
  there exists at most one undirected edge between $u$ and $v$ in $\calE$.
  Henceforth, we may omit representing labels of edges
  and of paths in $\calG$. 
  
  Now, let us denote by $\psi$ the mapping that identifies every edge $e \in E$
  with the path $\psi(e) = (x \to (x,\theta,y,1) \to \ldots \to
  (x,\theta,y,11) \to y)$ in $\calG$, where $e = (x,\theta,y)$.
  Observe that $\psi$ extends immediately to a morphism that maps
  every path in $G$ to a path in $\calG$.  The relation
  $\lambda(\psi(e)) = \varphi(\lambda(e))$ holds for all edges $e \in
  E$, and therefore extends to all paths in $G$.  Hence, a path $\rho$
  in $G$ is Dyck if and only if the path $\psi(\rho)$ in $\calG$ is
  Dyck. \qed
\end{proof}

However, unlike in Section~\ref{section:near-dyck-to-dyck}, there may
exist Dyck paths in $\calG$ that are not of the form $\psi(\rho)$, as
shown by the examples of the two Dyck cycles $\gamma_1$ and $\gamma_2$
displayed in Fig.~\ref{fig:dyck-graph}.
Consequently, we cannot use directly the morphism $\psi$ to associate
every Dyck path in $\calG$ with a Dyck path in $G$.

\begin{figure}[ht]
\begin{center}
\begin{tikzpicture}[scale=0.68]
\SetGraphUnit{2}
\tikzset{EdgeStyle/.append style = {>=stealth}}

\Vertex[x=0,y=0,L={$s_1$}]{1}
\EA[L={$s_2$}](1){2}
\node[anchor=south] at (1,2.7) {Graph $G$};

\Edge[label={$\ell_1$},labelstyle={below},style={bend right=45}](1)(2)
\Edge[label={$\oell_1$},labelstyle={above},style={bend right=45}](2)(1)
\Edge[style={bend right=45,->}](1)(2)
\Edge[style={bend right=45,->}](2)(1)

\SetGraphUnit{1}

\Vertex[x=4.2,y=0,L={$s_1$}]{1}
\Vertex[x=16.8,y=0,L={$s_2$}]{2}
\node[anchor=south] at (10.5,2.7) {Graph $\calG$};

\vs

\Vertex[x=5.5,y=-1.3,L={~}]{11}
\EA[L={~}](11){12}
\EA[L={~}](12){13}
\EA[L={~}](13){14}
\EA[L={~}](14){15}
\EA[L={~}](15){16}
\EA[L={~}](16){17}
\EA[L={~}](17){18}
\EA[L={~}](18){19}
\EA[L={~}](19){110}
\EA[L={~}](110){111}

\Edge[label={$0$},labelstyle={below},style={bend right=45}](1)(11)
\Edge[label={$\ol{0}$},labelstyle={below}](11)(12)
\Edge[label={${\color{white}\ol{0}}1{\color{white}\ol{0}}$},labelstyle={below}](12)(13)
\Edge[label={${\color{white}\ol{0}}1{\color{white}\ol{0}}$},labelstyle={below}](13)(14)
\Edge[label={${\color{white}\ol{0}}0{\color{white}\ol{0}}$},labelstyle={below}](14)(15)
\Edge[label={${\color{white}\ol{0}}0{\color{white}\ol{0}}$},labelstyle={below}](15)(16)
\Edge[label={${\color{white}\ol{0}}1{\color{white}\ol{0}}$},labelstyle={below}](16)(17)
\Edge[label={${\color{white}\ol{0}}1{\color{white}\ol{0}}$},labelstyle={below}](17)(18)
\Edge[label={${\color{white}\ol{0}}1{\color{white}\ol{0}}$},labelstyle={below}](18)(19)
\Edge[label={${\color{white}\ol{0}}1{\color{white}\ol{0}}$},labelstyle={below}](19)(110)
\Edge[label={$\ol{1}$},labelstyle={below}](110)(111)
\Edge[label={$0$},labelstyle={below},style={bend right=45}](111)(2)

\Edge[style={bend right=45}](1)(11)
\Edge[style={bend right=45}](111)(2)

\Vertex[x=5.5,y=-1.3,L={~}]{11}
\EA[L={~}](11){12}
\EA[L={~}](12){13}
\EA[L={~}](13){14}
\EA[L={~}](14){15}
\EA[L={~}](15){16}
\EA[L={~}](16){17}
\EA[L={~}](17){18}
\EA[L={~}](18){19}
\EA[L={~}](19){110}
\EA[L={~}](110){111}

\Vertex[x=5.5,y=1.3,L={~}]{11}
\EA[L={~}](11){12}
\EA[L={~}](12){13}
\EA[L={~}](13){14}
\EA[L={~}](14){15}
\EA[L={~}](15){16}
\EA[L={~}](16){17}
\EA[L={~}](17){18}
\EA[L={~}](18){19}
\EA[L={~}](19){110}
\EA[L={~}](110){111}

\Edge[label={$\ol{0}$},labelstyle={above},style={bend left=45}](1)(11)
\Edge[label={$0$},labelstyle={above}](11)(12)
\Edge[label={$\ol{1}$},labelstyle={above}](12)(13)
\Edge[label={$\ol{1}$},labelstyle={above}](13)(14)
\Edge[label={$\ol{0}$},labelstyle={above}](14)(15)
\Edge[label={$\ol{0}$},labelstyle={above}](15)(16)
\Edge[label={$\ol{1}$},labelstyle={above}](16)(17)
\Edge[label={$\ol{1}$},labelstyle={above}](17)(18)
\Edge[label={$\ol{1}$},labelstyle={above}](18)(19)
\Edge[label={$\ol{1}$},labelstyle={above}](19)(110)
\Edge[label={$1$},labelstyle={above}](110)(111)
\Edge[label={$\ol{0}$},labelstyle={above},style={bend left=45}](111)(2)

\Edge[style={bend left=45}](1)(11)
\Edge[style={bend left=45}](111)(2)

\Vertex[x=5.5,y=1.3,L={~}]{11}
\EA[L={~}](11){12}
\EA[L={~}](12){13}
\EA[L={~}](13){14}
\EA[L={~}](14){15}
\EA[L={~}](15){16}
\EA[L={~}](16){17}
\EA[L={~}](17){18}
\EA[L={~}](18){19}
\EA[L={~}](19){110}
\EA[L={~}](110){111}

\draw[draw=red,very thick,->,>=stealth] (4.85,-0.05) -- (4.85,-0.3) arc (180:270:0.65) arc (-90:90:0.15) arc (270:180:0.35)
-- (5.15,0.3) arc (180:90:0.35) arc (-90:90:0.15) arc (90:180:0.65) -- (4.85,0.05);

\draw[draw=black!50!green,very thick,->,>=stealth] (3.55,-0.05) -- (3.55,-1) arc (180:270:1) -- (11.5,-2) arc (90:-90:0.15) -- (9.5,-2.3) arc (90:270:0.15)
-- (13.35,-2.6) -- (15.85,-2.6) arc (-90:0:1.6) -- (17.45,1) arc (0:90:1.6) -- (7.5,2.6) arc (90:270:0.15)
-- (9.5,2.3) arc (90:-90:0.15) -- (4.55,2) arc (90:180:1) -- (3.55,0.05);

\node[anchor=west] at (5.15,0) {\color{red}\textbf{Dyck cycle $\gamma_1$}};
\node[anchor=north west] at (5.15,-2) {\color{black!50!green}\textbf{Dyck cycle $\gamma_2$}};
\end{tikzpicture}
\end{center}
\vspace{-2mm}
\caption{Graphs $G$ and $\calG$, and Dyck cycle in $\calG$}
\label{fig:dyck-graph}
\end{figure}

We overcome this problem as follows. Let $\calQ$ be the set of all
factors of all Dyck words with letters in $\calL$ (called
  \emph{approximate Dyck words}), and let $\calP$ the set of all
paths $\pi$ in $\calG$ such that $\lambda(\pi) \in \calQ$ (called
  \emph{approximate Dyck paths}).  Moreover, for every set $S$ of
paths, we denote by $\lambda(S)$ the set of labels of paths in $S$,
i.e. $\lambda(S) = \{\lambda(\pi) \mid \pi \in S\}$.  It comes at once
that $\lambda(\calP) \subseteq \calQ$, that $\calQ$ is factor
closed, and that none of the words $1 \cdot \ol{0}$ nor $0 \cdot
\ol{1}$ belongs to $\calQ$.

We further say that a path in $\calG$ is \emph{nominal} if
its source and sink belong to $V$, while its intermediate vertices
belong to $\calV \setminus V$.
For all edges $(x,\lambda,y) \in E$, we denote by
$\calP_{x,\lambda,y}$ the set of nominal paths $\pi \in \calP$ such
that $\pi$ has source $x$, sink $y$, and such that its internal
vertices are exactly the elements of the set $\{(x,\lambda,y,i) \mid 1
\leq i \leq 11\}$.  For all vertices $x \in V$, we also denote by
$\calP_x$ the set of nominal paths $\pi \in \calP$ such that $\pi$ has
source and sink $x$, and whose edges are all labeled with $0$ or
$\ol{0}$.  These two classes of paths capture the entire family of
nominal paths that belong to $\calP$, as shown by the following
result.

\begin{lemma}
\label{lem:nominal-in-P}
Let $\pi \in \calP$ be a nominal path in $\calG$. Either there exists an edge $(x,\lambda,y) \in E$ such that
$\pi \in \calP_{x,\lambda,y}$ or there exists a vertex $x \in V$ such that $\pi \in \calP_x$.
Moreover, the sets $\calP_{x,\lambda,y}$ and $\calP_x$ are pairwise disjoint.
\end{lemma}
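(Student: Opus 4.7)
The plan is to analyze any nominal path $\pi \in \calP$ by examining its first edge leaving the source $x \in V$ and tracking how the \emph{locks} inside the corresponding gadget constrain subsequent behaviour. The starting observation is that $\pi$ stays entirely inside a single gadget: since $\pi$ is nominal, no intermediate vertex of $\pi$ lies in $V$, yet each vertex of the form $(x'',\lambda'',y'',i)$ is adjacent in $\calG$ only to other vertices of the same gadget and to its two $V$-endpoints $x''$ and $y''$. So once $\pi$ enters a gadget through one of its $V$-endpoints, it can leave it only through one of those two endpoints, and that second $V$-visit is the sink of $\pi$.

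I would then split according to the first edge taken from $x$. Assume first that it lies in $\calE_{\init}$, so it is $x \to (x,\lambda,y',1)$ for some $(x,\lambda,y') \in E$. The crucial tool is the first lock $\ol{0} \cdot 1$ appearing in $\varphi(\lambda)$: any forward-then-backward traversal across this lock contributes the factor $1 \cdot \ol{0}$ to $\lambda(\pi)$, which is not a factor of any Dyck word and hence not in $\calQ$. A direct inspection of each of the four values of $\lambda$ shows that the pre-lock region reachable from $x$ is $\{(x,\lambda,y',1)\}$ when $\lambda \in \{\oell_1,\oell_2\}$ and $\{(x,\lambda,y',1),(x,\lambda,y',2)\}$ when $\lambda \in \{\ell_1,\ell_2\}$, and that all edges used inside this region carry labels in $\{0,\ol{0}\}$. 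Consequently, either $\pi$ stays inside this region and ends back at $x$, giving $\pi \in \calP_x$; or $\pi$ crosses the first lock forward, after which it cannot return to $x$, must therefore end at $y'$, and, by the sequential shape of the gadget, visits every $(x,\lambda,y',i)$ at least once, giving $\pi \in \calP_{x,\lambda,y'}$.

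The symmetric case in which the first edge lies in $\calE_{\eend}$, say $x \to (x',\lambda',x,11)$, is handled via the second lock $\ol{1} \cdot 0$ of $\varphi(\lambda')$, whose reverse crossing would produce the forbidden factor $0 \cdot \ol{1}$. The analogous per-letter inspection shows that $\pi$ is trapped near $x$, visiting only $(x',\lambda',x,11)$ and, when $\lambda' \in \{\oell_1,\oell_2\}$, also $(x',\lambda',x,10)$, uses only $\{0,\ol{0}\}$-labeled edges, and must end at $x$; so $\pi \in \calP_x$. In particular, no such $\pi$ can reach the far endpoint $x'$, because the required complete reverse traversal of the gadget would itself begin by crossing the second lock backwards (both reversed labels $\varphi(\lambda')^R$ begin with a word that creates a $0 \cdot \ol{1}$ factor).

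For pairwise disjointness, each $\calP_{x,\lambda,y}$ is uniquely identified by its internal vertex set $\{(x,\lambda,y,i) \mid 1 \le i \le 11\}$, which pins down $(x,\lambda,y)$; moreover, any path in $\calP_{x,\lambda,y}$ must traverse the edge between $(x,\lambda,y,2)$ and $(x,\lambda,y,3)$, and therefore uses an edge with label $\varphi(\lambda)_3 \in \{1,\ol{1}\}$ (true for all four values of $\lambda$), which is incompatible with lying in any $\calP_{x'}$. The main technical obstacle will be the per-case verification that the pre-lock regions use only $\{0,\ol{0}\}$-labeled edges and that no nominal path in $\calP$ can sneak out of those regions without crossing a lock; this reduces to a straightforward inspection of the prefix of $\varphi(\lambda)$ up to the first lock for each $\lambda \in L$, together with the forbidden-factor property of $1 \cdot \ol{0}$ and $0 \cdot \ol{1}$.
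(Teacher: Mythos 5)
Your proof is correct and follows essentially the same route as the paper's: nominality confines $\pi$ to a single gadget with its two $V$-endpoints, the forbidden factors $1 \cdot \ol{0}$ and $0 \cdot \ol{1}$ (the locks) pin down the admissible source and sink, and a per-letter inspection of $\varphi$ handles the four cases, with disjointness obtained exactly as in the paper from the mandatory $1$/$\ol{1}$-labeled edge in any path of $\calP_{x,\lambda,y}$. The only difference is organizational: the paper splits on whether $\pi$ uses an edge labeled $1$ or $\ol{1}$, while you split on the direction of the first edge leaving the source; the underlying argument is the same.
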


\begin{proof}
We first assume that some edge $e$ in $\pi$ is labeled by $1$ or $\ol{1}$.
By construction, there exists a unique edge $(x,\lambda,y) \in E$ and a unique pair of integers
$i, j \in \{1,\ldots,11\}$ such that $e = (x,\lambda,y,i) \to (x,\lambda,y,j)$, with $j = i \pm 1$.
Since $\pi$ is nominal, its internal vertices belong to the set $\{(x,\lambda,y,i) \mid 1 \leq i \leq 11\}$,
and its source and sink belong to $\{x,y\}$.
Then, since $\pi$ belongs to $\calP$, it does not contain doomed paths,
hence its source must be $x$ and its sink must be $y$.

Then, assume that no edge in $\pi$ is labeled by $1$ or $\ol{1}$.
Since $\pi$ is nominal, there exists a unique edge $(x,\lambda,y) \in E$ such that
the internal vertices of $\pi$ belong to the set $\{(x,\lambda,y,i) \mid 1 \leq i \leq 11\}$,
and its source and sink belong to $\{x,y\}$.
If $x$ is the source of $\pi$, then $\pi$ can never reach the vertex $(x,\lambda,y,3)$, hence $x$ is the sink of $\pi$;
if $y$ is the source of $\pi$, then $\pi$ can never reach the vertex $(x,\lambda,y,9)$, hence $y$ is the sink of $\pi$.

Observing that every path in every set $\calP_{x,\lambda,y}$ contains
an edge labeled by $1$ or $\ol{1}$ completes the proof. \qed
\end{proof}

Going further, we associate with every path $\rho =
(v_1,\lambda_1,w_1) \cdot \ldots \cdot (v_k,\lambda_k,w_k)$ in $G$ the
set $\ol{\calP}_\rho$ of paths in $\calG$ defined by:
\[\ol{\calP}_\rho = \calP_{v_1}^\ast \cdot \calP_{v_1,\lambda_1,w_1} \cdot \calP_{v_2}^\ast \cdot \calP_{v_2,\lambda_2,w_2} \cdot \ldots
\cdot \calP_{v_k}^\ast \cdot \calP_{v_k,\lambda_k,w_k} \cdot \calP_{w_k}^\ast.\]
Observe that, unlike the sets $\calP_x$ and $\calP_{x,\lambda,y}$,
the sets $\ol{\calP}_\rho$ may contain paths that are not nominal and/or not approximate Dyck paths.

Conversely, however, it comes immediately that every Dyck path $\pi$ in $\calG$
belongs to one unique set $\ol{\calP}_\rho$, where $\rho$ is the \emph{nominal ancestor}
of $\pi$ defined below.

\begin{definition}
\label{dfn:original-path}
Let $\pi$ be a Dyck path in $\calG$ from $s$ to $t$. There exists a
unique sequence of vertices $v_0,\ldots,v_k$, a unique partial
function $f_\pi : \{1,\ldots,k\} \mapsto L$, whose domain is
denoted by $\dom(f_\pi)$, and a unique sequence of
nominal paths $\pi_1,\ldots,\pi_k$ such that:
\begin{itemize}
 \item $v_0 = s$ and $v_k = t$;
 \item for all $i \in \dom(f_\pi)$, the edge $(v_{i-1},f_\pi(i),v_i)$ belongs to $E$, and $\pi_i \in \calP_{v_{i-1},f_\pi(i),v_i}$;
 \item for all $i \in \{1,\ldots,k\} \setminus \dom(f_\pi)$, we have $v_{i-1} = v_i$, and $\pi_i \in \calP_{v_i}$;
 \item $\pi = \pi_1 \cdot \ldots \cdot \pi_k$.
\end{itemize}

We call \emph{nominal vertex sequence} of $\pi$ sequence $v_0,\ldots,v_k$,
\emph{nominal label mapping} of $\pi$ the mapping $f_\pi$,
\emph{nominal decomposition} of $\pi$ the sequence $\pi_1,\ldots,\pi_k$,
and \emph{nominal ancestor} of $\pi$ the path $(v_{i-1},f_\pi(i),v_i)_{i \in \dom(f_\pi)}$.
\end{definition}

Associating every Dyck path in $\calG$ to a unique path in $G$
is a first step towards proving the soundness of the construction.
Further steps depend on the following, additional properties of the encoding.

Every Dyck path
traveling through the word $\varphi(\ell_1)$, may go back and forth
arbitrarily, except at locks, which it may cross only once.  Consider
such a possible journey through the word $\varphi(\ell_1)$, and
observe the word $w$ obtained during that journey.  This word is made
of blocks that consist, alternatively, of letters $0$ and $\ol{0}$,
and of letters $1$ and $\ol{1}$. Such a word, \emph{even if we first
reduce it} (by deleting recursively the words $0 \cdot \ol{0}$ and $1 \cdot \ol{1}$),
will always satisfy the following properties:
\begin{enumerate}
 \item there exists at least four such (non-empty) blocks;
 \item the last block consists of letters $0$ only;
 \item the last two blocks are of odd length, and every other block is of even length.
\end{enumerate}

To illustrate the above analysis, consider the direct journey
through $\varphi(\ell_1)$, where we have identified the blocks:
$(0 \cdot \ol{0}) \cdot (1 \cdot 1) \cdot (0 \cdot 0) \cdot (1 \cdot 1 \cdot
1 \cdot 1 \cdot \ol{1}) \cdot (0)$.
If that word is reduced, there remain four non-empty blocks:
$(1 \cdot 1) \cdot (0 \cdot 0) \cdot (1 \cdot 1 \cdot 1) \cdot (0)$.

Another example is the journey first followed by the path $\gamma_2$ (see Fig.~\ref{fig:dyck-graph})
from the vertex $s_1$ to the vertex $s_2$, and which gives us more blocks:
$(0 \cdot \ol{0}) \cdot (1 \cdot 1) \cdot (0 \cdot 0) \cdot (1 \cdot 1) \cdot
(0 \cdot 0) \cdot (1 \cdot 1 \cdot 1 \cdot 1 \cdot \ol{1}) \cdot (0)$.
If that word is reduced, there remain six non-empty blocks:
$(1 \cdot 1) \cdot (0 \cdot 0) \cdot (1 \cdot 1) \cdot
(0 \cdot 0) \cdot (1 \cdot 1 \cdot 1) \cdot (0)$.
One checks easily on these examples that all the words obtained satisfy the properties 1--3.

Properties 1--2 hold for the word $\varphi(\ell_2)$, while
property 3 should be replaced by:
\begin{enumerate}
 \item[3'.] the first block of letters $1$ and $\ol{1}$
 and the last block of letters $0$
 are of odd length,
 and every other block is of even length.
\end{enumerate}
This distinction between encodings of the two letters will allow
identifying a path that encodes $\ell_1$ or $\ell_2$, even when
there is backtracking between the two locks.

Now, we denote by $\calQ_{\init}$ the set of all prefixes of all Dyck words with letters in $\calL$.
Observe that, for all words $\rho_1, \rho_2 \in \calL^\ast$,
the three words $\rho_1 \cdot \rho_2$, $\rho_1 \cdot 0 \cdot \ol{0} \cdot \rho_2$
and $\rho_1 \cdot 1 \cdot \ol{1} \cdot \rho_2$ are
either all Dyck or all non-Dyck.

Then, for every word $w \in \calL^\ast$, we call \emph{reduced word} of $w$
the word $\red(w)$ obtained from $w$ by deleting recursively the $2$-letter words
$0 \cdot \ol{0}$ or $1 \cdot \ol{1}$.
Alternatively, if we consider $w$ as an element of the free
group generated by $\ell_1$ and $\ell_2$ (with inverses $\oell_1$ and $\oell_2$),
then $\red(w)$ is the reduced word representing $w$.
We just proved that $w$ is Dyck if and only if $\red(w)$ is Dyck.
Moreover, it comes immediately that $\calQ_{\init}$ is in fact the set
of all words $w$ such that $\red(w)$ has only letters $0$ and $1$.

The above remarks and notions lead to the following results,
whose proofs are then technical yet simple, and therefore omitted here.

\begin{lemma}
\label{lem:q-init-dyck-init}
  Let $\rho$ be a path in $G$.
  If $\rho$ is not an approximate Dyck path, then
  $\lambda(\ol{\calP}_\rho) \cap \calQ = \emptyset$,
  and if $\lambda(\rho)$ is not a prefix of a Dyck word,
  then $\lambda(\ol{\calP}_\rho) \cap \calQ_\init = \emptyset$.
\end{lemma}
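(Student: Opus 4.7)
I would prove both assertions by contrapositive. Assume $\pi \in \ol{\calP}_\rho$ has $\lambda(\pi) \in \calQ$ (for the first assertion) or $\lambda(\pi) \in \calQ_\init$ (for the second); the aim is to show that $\lambda(\rho)$ is, respectively, a factor or a prefix of some Dyck word over $L$. First, unwinding the definition of $\ol{\calP}_\rho$, I would decompose $\pi = \sigma_0 \cdot \pi_1 \cdot \sigma_1 \cdots \pi_k \cdot \sigma_k$ along $\rho = (v_{i-1},\lambda_i,v_i)_{1 \leq i \leq k}$, with each $\sigma_i \in \calP_{v_i}^{\ast}$ and each $\pi_i \in \calP_{v_{i-1},\lambda_i,v_i}$. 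Since $\calQ$ is factor-closed (and $\calQ_\init$ is prefix-closed), every sub-path above inherits the same membership as $\pi$.

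Next, I would analyse the reduced label of each piece. Every path in $\calP_{v_i}$ is labelled over $\{0,\ol{0}\}$, so $\red(\lambda(\sigma_i))$ must take the form $\ol{0}^{a_i} \cdot 0^{b_i}$ for some $a_i, b_i \geq 0$, since these are the only reduced words over $\{0,\ol{0}\}$ that belong to $\calQ$. For each $\pi_i$, the lock analysis recalled in the excerpt shows that $\red(\lambda(\pi_i))$ decomposes into at least four alternating blocks over $\{0,\ol{0}\}$ and $\{1,\ol{1}\}$, satisfying the parity properties $(1)$--$(3)$ when $\lambda_i = \ell_1$, $(1)$--$(3')$ when $\lambda_i = \ell_2$, and the corresponding mirror conditions when $\lambda_i \in \{\oell_1,\oell_2\}$. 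The sign (unbarred vs.\ barred) of $\lambda_i$ is read from whether the last or the first block is a $0$- or $\ol{0}$-block, and the index ($1$ or $2$) from which internal blocks have odd length; so $\lambda_i$ is uniquely determined by the block structure of $\red(\lambda(\pi_i))$.

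I would then recombine. Because no $\sigma$-piece contains a $\{1,\ol{1}\}$-letter, the $\{1,\ol{1}\}$-blocks of $\red(\lambda(\pi))$ are exactly those produced by the $\pi_i$'s; between two consecutive such blocks, the $\{0,\ol{0}\}$-content is obtained by concatenating and reducing the $\{0,\ol{0}\}$-suffix of $\red(\lambda(\pi_i))$, the word $\ol{0}^{a_i} \cdot 0^{b_i}$ coming from $\sigma_i$, and the $\{0,\ol{0}\}$-prefix of $\red(\lambda(\pi_{i+1}))$. Since $\lambda(\pi) \in \calQ$, the overall reduction contains no factor $0 \cdot \ol{1}$ or $1 \cdot \ol{0}$, which block-by-block translates, via the identification of each $\lambda_i$ from the previous step, into the statement that $\red(\lambda(\rho))$ avoids the analogous mismatched factors $\ell_a \cdot \oell_b$ (with $a \neq b$), i.e.\ has the shape $\oell^{\ast} \cdot \ell^{\ast}$, meaning that $\lambda(\rho)$ is a factor of a Dyck word over $L$. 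Part~(b) follows by the same argument under the stronger hypothesis: if $\red(\lambda(\pi))$ contains no barred letter at all, then by the previous step every $\lambda_i$ must lie in $\{\ell_1,\ell_2\}$, so $\red(\lambda(\rho))$ is $\oell$-free and $\lambda(\rho)$ is a prefix of a Dyck word.

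The main obstacle will be the boundary-cancellation bookkeeping in the recombination step: when the $0$-suffix of some $\red(\lambda(\pi_i))$ partially cancels with the $\ol{0}$-prefix of $\red(\lambda(\sigma_i))$, or symmetrically when the $\ol{0}$-prefix of $\red(\lambda(\pi_{i+1}))$ cancels with a $0$-suffix from $\sigma_i$, one must verify that the surviving blocks still retain enough parity information to identify $\lambda_i$ and $\lambda_{i+1}$ without ambiguity. This is the elementary but tedious case analysis that the excerpt implicitly refers to when calling the proofs ``technical yet simple''.
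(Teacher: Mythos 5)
Your piecewise analysis---the decomposition of $\pi$ along $\rho$, the shape $\ol{0}^{a}\cdot 0^{b}$ for the reduced labels of the $\calP_{v_i}$-pieces, and the identification of each $\lambda_i$ from the block structure of $\red(\lambda(\pi_i))$ taken \emph{individually}---is sound and matches the paper's Lemma~\ref{lem:1-0-omega}. The genuine gap is in your recombination step, and it is not mere bookkeeping. You claim that the $\{1,\ol{1}\}$-blocks of $\red(\lambda(\pi))$ are exactly those produced by the $\pi_i$'s and that the forbidden factors $0\cdot\ol{1}$, $1\cdot\ol{0}$ translate ``block-by-block'' into mismatched factors of $\red(\lambda(\rho))$. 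But reduction is global: for any legal consecution (say $\varphi(\ell_1)\cdot\varphi(\oell_1)$, a Dyck word) the blocks of consecutive pieces annihilate completely, so in general no $\lambda_i$ is readable off $\red(\lambda(\pi))$ at all. Worse, the two letters of a mismatched pair $\ell_a\cdots\oell_b$ in $\red(\lambda(\rho))$ correspond to pieces $\pi_i$ and $\pi_j$ separated by the image of an entire Dyck sub-path, whose reduced label lies in $\varpi$ but need not be empty (already $\ol{0}\cdot\ol{0}$, the label of a single back-and-forth step, is reduced, harmless for $\calQ$, and in $\varpi$); so no analysis of two consecutive pieces, however careful about parities, can detect the mismatch. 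The paper supplies the two ingredients your plan lacks: Lemma~\ref{lem:1-0-omega-2}, which shows that $\varpi$-material enclosed between the closing lock $\ol{1}\cdot 0$ of one encoding and the opening lock $\ol{0}\cdot\ol{1}$ of the next, inside a word of $\calQ$, must cancel \emph{entirely}, so that the parity signatures of the two outer encodings actually meet (and clash when $a\neq b$, Lemma~\ref{lem:1-0-omega-3}); and an induction on the nesting structure of $\rho$ with the fixed-point invariant $\lambda(\ol{\calP}_\rho)\cap\calQ\subred\varpi$ for Dyck $\rho$, which is what licenses treating the unboundedly nested separating material as a single $\varpi$-word in the first place. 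Without that induction, your translation step simply has no handle on cancellation across many pieces.

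Your part~(b) also contains a false inference: from ``$\red(\lambda(\pi))$ contains no barred letter'' you conclude that \emph{every} $\lambda_i$ lies in $\{\ell_1,\ell_2\}$. Take $\rho$ with label $\ell_1\cdot\oell_1$ and $\pi=\psi(\rho)$: then $\lambda(\pi)$ is a Dyck word, so $\red(\lambda(\pi))=\varepsilon$ is barred-free, yet $\lambda_2=\oell_1$. The correct target is only that $\red(\lambda(\rho))$ is barred-free, and reaching it requires a separate argument rather than a rerun of the factor case: the paper takes the longest prefix $\rho_1$ of $\rho$ whose label is a prefix of a Dyck word, shows (using the first part, since $\rho$ may be assumed approximate Dyck) that $\rho_1$ is in fact a Dyck path and that the next edge carries a barred label, and then invokes the last two relations of Lemma~\ref{lem:1-0-omega-3}: any word of $\varpi\cdot\lambda(\calP_{x,\oell_i,y})$ lying in $\calQ$ reduces into $\varpi_-\cdot\lambda(\calP_{x,\oell_i,y})$, hence begins with a barred letter, which no word of $\calQ_\init$ does. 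In short, the sound parts of your plan reconstruct the paper's auxiliary Lemma~\ref{lem:1-0-omega}, but the core of the proof---detecting illegal consecutions and illegal barred steps through nested global cancellation---is missing, and the outline as written fails on instances where matched pieces cancel.
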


\begin{lemma}
\label{lem:counter-paths-core}
Let $\pi$ be a Dyck path from $s$ to $t$ in $\calG$,
let $\rho$ be the nominal ancestor of $\pi$,
and let $\lambda(\rho) \in L^\ast$ be the label of $\rho$.
In addition, let $\mu : L^\ast \mapsto \bbZ$ be the morphism of monoids defined by
$\mu(\ell_1) = \mu(\ell_2) = 1$ and $\mu(\oell_1) = \mu(\oell_2) = -1$.
Then, we have $\mu(\lambda(\rho)) = 0$.
\end{lemma}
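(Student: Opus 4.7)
The plan is to invoke the infinite dihedral group $D_\infty = \langle r, s \mid r^2 = s^2 = 1 \rangle$ as an invariant. Define a monoid morphism $\psi : \calL^\ast \to D_\infty$ by $\psi(0) = \psi(\ol 0) = r$ and $\psi(1) = \psi(\ol 1) = s$; because $\psi(a)^2 = 1$ for every $a \in \calL$, $\psi$ factors through the free group on $\{0,1\}$, so the Dyck property of $\lambda(\pi)$ yields $\psi(\lambda(\pi)) = 1$ in $D_\infty$. The central tool will be an elementary \emph{tree lemma}: for any walk $\pi'$ on a tree from $u$ to $v$, $\psi(\lambda(\pi'))$ coincides with $\psi$ of the label of the unique simple $u$-to-$v$ path. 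This follows by induction on the walk length, since any walk strictly longer than the simple path must contain two consecutive traversals of the same edge $e$, and excising that back-and-forth leaves the $\psi$-image unchanged.

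I then apply the tree lemma piecewise along the nominal decomposition $\pi = \pi_1 \cdots \pi_k$. For each $i \in \dom(f_\pi)$, the piece $\pi_i \in \calP_{v_{i-1}, f_\pi(i), v_i}$ walks on the nominal line for $f_\pi(i)$, which is a path graph, so $\psi(\lambda(\pi_i)) = \psi(\varphi(f_\pi(i)))$; a direct computation using $r^2 = s^2 = 1$ gives $\psi(\varphi(\ell_j)) = sr$ and $\psi(\varphi(\oell_j)) = rs = (sr)^{-1}$ for both $j \in \{1,2\}$. For each $i \notin \dom(f_\pi)$, the piece $\pi_i \in \calP_{v_i}$ uses only edges of $\calG$ labeled by $0$ or $\ol 0$. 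I will argue that the $\{0, \ol 0\}$-subgraph of $\calG$ is a forest: each $\varphi(\lambda)$ contains a letter of $\{1, \ol 1\}$ strictly between its two endpoints (so the $\{0, \ol 0\}$-edges of any nominal line do not connect its two $V$-endpoints), and distinct nominal lines share only vertices in $V$. Hence $\pi_i$ is a closed walk on a tree, and the tree lemma gives $\psi(\lambda(\pi_i)) = 1$.

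Combining all the pieces, $1 = \psi(\lambda(\pi)) = \prod_{i \in \dom(f_\pi)} (sr)^{\epsilon_i}$, where $\epsilon_i = +1$ if $f_\pi(i) \in \{\ell_1, \ell_2\}$ and $\epsilon_i = -1$ otherwise. Because all these factors lie in the infinite cyclic subgroup $\langle sr \rangle$ of $D_\infty$, their product equals $(sr)^{\mu(\lambda(\rho))}$; since $sr$ has infinite order in $D_\infty$, this forces $\mu(\lambda(\rho)) = 0$. The main obstacle I anticipate is making the forest claim for the $\{0, \ol 0\}$-subgraph fully rigorous, by checking that no cycle can form through two distinct $V$-vertices given the encodings $\varphi(\ell_j), \varphi(\oell_j)$; once that is in hand, the rest is a clean computation.
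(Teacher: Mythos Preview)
Your argument is correct, and at its core it \emph{is} the paper's proof: since $D_\infty \cong \bbZ_2 \ast \bbZ_2$, your map $\psi$ coincides with the paper's projection $\Theta$ of the free group $\bbZ \ast \bbZ$ onto $\bbZ_2 \ast \bbZ_2$ (sending $0,\ol{0}$ to one order-$2$ generator and $1,\ol{1}$ to the other), and your infinite-order element $sr$ is precisely the paper's $\gamma = (0,1)\cdot(1,0)$. Where the two arguments diverge is in how the image of each nominal piece is computed. The paper invokes Lemma~\ref{lem:1-0-omega}, i.e.\ the reduced-label description of $\lambda(\calP_{x,\lambda,y})$ in terms of the regular languages $\varpi,\omega_\pm$ obtained from a careful lock analysis, and then checks that $\Theta$ collapses those languages to $\{\gamma^{\pm 1}\}$ or $\{\gamma^0\}$. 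You instead read the answer off the graph: each $\pi_i \in \calP_{x,\lambda,y}$ is a walk on a path graph and each $\pi_i \in \calP_x$ is a closed walk on a forest, so your tree lemma pins down $\psi(\lambda(\pi_i))$ without ever reducing words. This bypasses Lemma~\ref{lem:1-0-omega} entirely and is a pleasantly direct route to the present statement (though that lemma is still needed for Lemma~\ref{lem:q-init-dyck-init}).

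One small wrinkle worth patching: if $G$ contains a self-loop $(x,\lambda,x) \in E$, the associated nominal ``line'' in $\calG$ is a $12$-cycle, not a path, so the phrase ``which is a path graph'' fails literally and the tree lemma does not apply as stated. To close that case, observe that because $\pi_i$ is nominal it visits $x$ only at its endpoints, hence between its first and last edges it walks on the honest path $(x,\lambda,x,1)\text{--}\cdots\text{--}(x,\lambda,x,11)$; the locks (via the approximate-Dyck condition $\pi_i \in \calP$) then force the entry edge to be $x \to (x,\lambda,x,1)$ and the exit edge to be $(x,\lambda,x,11) \to x$, after which your tree lemma applies and gives $\psi(\lambda(\pi_i)) = \psi(\varphi(\lambda))$ as desired.
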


A consequence of Lemmas~\ref{lem:q-init-dyck-init} and~\ref{lem:counter-paths-core} is
the \emph{correctness} of the construction, which is therefore valid.

\ifarxiv\begin{proof}[Proof of Proposition~\ref{pro:dyck-to-undirected-dyck}]%
\else\begin{proof}[Proposition~\ref{pro:dyck-to-undirected-dyck}]\fi
First, if there exists a Dyck path from $s$ to $t$ in $G$,
then Lemma~\ref{lem:psi-Dyck} already states that there also exists a
Dyck path from $s$ to $t$ in $\calG$. Hence, we look at the converse implication.

Let $\pi$ be a Dyck path from $s$ to $t$ in $\calG$, and
let $\rho$ be the nominal ancestor of $\pi$. Let $\lambda(\rho)$
be the label of $\rho$ and let $\Lambda$ be the reduction of $\lambda(\rho)$.
Lemma~\ref{lem:q-init-dyck-init} proves that $\lambda(\rho)$ is
a prefix of a Dyck word, hence that $\Lambda$ has only letters $\ell_1$ and $\ell_2$.
Since Lemma~\ref{lem:counter-paths-core} also proves that $\mu(\lambda(\rho)) = \mu(\Lambda) = 0$,
it follows that $\Lambda$ is the empty word, i.e. that $\lambda(\rho)$ is a Dyck word. \qed
\end{proof}

We complete the proof of Theorem~\ref{thm:main} as follows.
Observe that the graph $\calG$ is \FO-definable as a function of $G$.
Furthermore, adding\slash deleting an edge in $E$ amounts to
adding\slash deleting exactly twelve edges in $\calE$.
Due to Proposition~\ref{pro:dyck-to-undirected-dyck}, the $2$-letter Dyck reachability problem is therefore \bfo-reducible
to the $2$-letter undirected Dyck reachability problem.

On the other hand, as a restriction of the $n$-letter Dyck reachability problem,
the $n$-letter undirected Dyck reachability problem is clearly in \PTIME, which completes the proof of Theorem~\ref{thm:main}.

\bibliographystyle{plain}

\pagebreak

\appendix

\centerline{\noindent{\LARGE\bfseries --- Appendix: Proving Proposition~\ref{pro:dyck-to-undirected-dyck} ---}}

\bigskip\bigskip

\label{app:lem-1-0-omega}

The following proofs of Lemmas~\ref{lem:q-init-dyck-init} and~\ref{lem:counter-paths-core}
are mainly based on the notions of nominal ancestors and of reductions, as well as
on the sets $\calQ$ (of approximate Dyck words) and
$\calQ_{\init}$ (of all prefixes of all Dyck words).
We refer the reader to above definitions and properties of these concepts.

Aiming to avoid using heavy notations, we identify below regular expressions
with the languages they represents and,
for all sets of words $A$ and $B$, we write $A \subred B$ as a placeholder for
$\{\red(w) \mid w \in A\} \subseteq B$.

Furthermore, recall that no word in $\calQ$ contains the sub-words $1 \cdot \ol{0}$ nor $0 \cdot \ol{1}$,
and that no word whose first letter is $\ol{0}$ or $\ol{1}$ can belong to $\calQ_{\init}$.
Hence, in a reduced word $w \in \calQ$, no letter $0$ or $1$ can be followed by a letter $\ol{0}$ or $\ol{1}$ and,
if $w \in \calQ_{\init}$, then $w$ contains no letter $\ol{0}$ or $\ol{1}$. 

In addition, consider the following regular expressions:
\begin{xalignat*}6
& \omega_+ = (0 \cdot 0 + 1 \cdot 1)^\ast &
& \varpi_+ = (\omega_+ \cdot 1 \cdot \omega_+ \cdot 1)^\ast \cdot \omega_+ \\
& \omega_- = (\ol{0} \cdot \ol{0} + \ol{1} \cdot \ol{1})^\ast &
& \varpi_- = (\omega_- \cdot \ol{1} \cdot \omega_- \cdot \ol{1})^\ast \cdot \omega_- \\
& \omega = (\omega_+ + \omega_- + \ol{0} \cdot 0)^\ast ~~~~ &
& \varpi = (\omega \cdot 1 \cdot \omega \cdot \ol{1})^\ast \cdot \omega
\end{xalignat*}

The intuition behind these expressions is as follows.
Roughly speaking, we prove below that:
\begin{itemize}
 \item every circuit in $\calG$ whose edges are all labeled by $0$ or $\ol{0}$ has its label in $\omega$, and therefore in $\varpi$;
 \item if $\rho$ is an approximate Dyck path in $G$, then every approximate
   Dyck path in $\calP_\rho$ has its label in $\varpi$.
\end{itemize}
These statements are proved by induction, hence $\varpi$ was found as a fixed point
for some closure properties similar to those of Dyck paths.
Then, $\omega_+$, $\omega_-$, $\varpi_+$ and $\varpi_-$ are just specializations of $\omega$ and $\varpi$
where only labels $0$ or $1$ (respectively, $\ol{0}$ or $\ol{1}$) are allowed.

These latter regular expressions will allow us to express the following auxiliary results,
which will lead us to Lemma~\ref{lem:1-0-omega-3}.

\begin{lemma}
\label{lem:1-0-omega-2}
The following relations hold:
\[(1 \cdot 0 \cdot \varpi) \cap \calQ \subred 1 \cdot 0 \cdot \varpi_+ ~~~~~~~~
(\varpi \cdot \ol{0} \cdot \ol{1}) \cap \calQ \subred \varpi_- \cdot \ol{0} \cdot \ol{1} 
\]
\end{lemma}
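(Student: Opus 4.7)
My plan is to prove both inclusions by analyzing how the reduction $\red$ interacts with the block structure of $\varpi$, exploiting the factor-closedness of $\calQ$ together with the exclusions $1 \cdot \ol{0}, 0 \cdot \ol{1} \notin \calQ$.

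The two claims are mirror images of one another under the involution $\sigma$ sending $w_1 \cdots w_n$ to $\ol{w_n} \cdots \ol{w_1}$ (with $\ol{\ol{x}} = x$). Direct checks yield $\sigma(\calQ) = \calQ$, $\sigma \circ \red = \red \circ \sigma$, $\sigma(\omega) = \omega$, $\sigma(\varpi) = \varpi$, $\sigma(\varpi_+) = \varpi_-$, and $\sigma(1 \cdot 0) = \ol{0} \cdot \ol{1}$. Applying $\sigma$ to the first inclusion yields the second, so it suffices to establish the first one.

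Fix $w = 1 \cdot 0 \cdot v$ with $v \in \varpi$ and $w \in \calQ$. Since $w \in \calQ$, the reduced word $\red(w)$ lies in $\ol{\{0, 1\}}^\ast \cdot \{0, 1\}^\ast$, so any surviving barred letters must sit at its front. I will first show that $\red(w)$ actually starts with $1 \cdot 0$: for the leading $1$ to vanish, some prefix $1 \cdot 0 \cdot p$ of $w$ (with $p$ a prefix of $v$) must reduce to $\varepsilon$, which forces $p$ to carry one surplus $\ol{0}$ and one surplus $\ol{1}$ arranged compatibly. A case analysis on the decomposition $v = u_0 \cdot 1 \cdot u_1 \cdot \ol{1} \cdots u_{2m}$ with $u_i \in \omega$, combined with the absence of $1 \cdot \ol{0}$ and $0 \cdot \ol{1}$ factors in $w$, rules this out; an analogous argument, shifted by one letter, handles the second letter $0$.

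It then remains to show that the tail $r$ in $\red(w) = 1 \cdot 0 \cdot r$ lies in $\varpi_+$. I will proceed by induction on $m$. The base case $m = 0$ amounts to checking that when $u \in \omega$ and $1 \cdot 0 \cdot u \in \calQ$, the tail of $\red(1 \cdot 0 \cdot u)$ after $1 \cdot 0$ lies in $\omega_+$: a finite-alphabet analysis across the sub-blocks $p_j \in \omega_+ \cup \omega_- \cup \{\ol{0} \cdot 0\}$ composing $u$ shows that at each junction the $\calQ$-constraint forces every $\ol{0}$ or $\ol{1}$ produced by an $\omega_-$ or $\ol{0} \cdot 0$ sub-block to be matched against a $0$ or $1$ coming from adjacent $\omega_+$-material, so that all barred letters cancel and only $\omega_+$-content remains. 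The inductive step peels off the trailing matched pair $1 \cdot u_{2m-1} \cdot \ol{1} \cdot u_{2m}$ of $v$ and shows, via the same boundary analysis plus the induction hypothesis on the shorter word, that it contributes exactly one factor $1 \cdot \omega_+ \cdot 1 \cdot \omega_+$ to $r$. The main obstacle is precisely this base-case bookkeeping: tracking which letters may legally appear at each sub-block boundary under the $\calQ$-constraint is tedious, even though no individual step is conceptually deep.
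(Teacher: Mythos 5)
You have the right two intermediate targets (the prefix $1 \cdot 0$ survives reduction; the remaining tail is an unbarred word of $\varpi_+$), and your symmetry reduction via $\sigma$ is sound --- the paper likewise dispatches the second relation ``similarly''. But the one step of your induction that is stated concretely is false: the trailing pair $1 \cdot u_{2m-1} \cdot \ol{1} \cdot u_{2m}$ does \emph{not} in general contribute ``exactly one factor $1 \cdot \omega_+ \cdot 1 \cdot \omega_+$'' to $r$. Take $v = 1 \cdot \ol{1} \in \varpi$ (with $u_0 = u_1 = u_2 = \varepsilon$): then $1 \cdot 0 \cdot v \in \calQ$ and $\red(1 \cdot 0 \cdot v) = 1 \cdot 0$, so the pair contributes nothing. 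Worse, take $v = 1 \cdot 1 \cdot 1 \cdot \ol{1} \cdot \ol{1} \cdot \ol{1}$, parsed as $u_0 = 1 \cdot 1$, separator $1$, $u_1 = \varepsilon$, separator $\ol{1}$, $u_2 = \ol{1} \cdot \ol{1}$: again $1 \cdot 0 \cdot v \in \calQ$ and $\red(1 \cdot 0 \cdot v) = 1 \cdot 0$, but here the peeled suffix erases the letters of $u_0$, i.e.\ cancellation crosses the boundary along which you peel. This is the structural obstacle to the whole scheme: since $\red$ deletes only the factors $0 \cdot \ol{0}$ and $1 \cdot \ol{1}$, a suffix of $\omega_-$-material can eat arbitrarily deep into the already-reduced prefix, so ``the same boundary analysis plus the induction hypothesis on the shorter word'' does not compose; you would need an induction hypothesis describing \emph{all} reduced prefixes reachable from $(1 \cdot 0 \cdot \varpi) \cap \calQ$, not just the final shape of $r$. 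Since your two remaining steps (``a case analysis \ldots rules this out'', ``a finite-alphabet analysis \ldots shows'') are deferred exactly where this difficulty lives, the proposal as it stands has no mechanism that survives cross-boundary cancellation: that is a genuine gap, not mere bookkeeping.

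The paper's proof sidesteps any induction on the decomposition of $\varpi$. It introduces the regular language $\alpha = 1 \cdot (0+\ol{0}) \cdot (0 \cdot 0 + \ol{0} \cdot 0 + 0 \cdot \ol{0} + \ol{0} \cdot \ol{0} + 1 + \ol{1})^\ast$, checks that $1 \cdot 0 \cdot \varpi \subseteq \alpha$ and, crucially, that $\alpha$ is \emph{closed under reduction}; this single closure property absorbs, once and for all, every cancellation pattern, including the cross-boundary ones above. Then $\rho' = \red(1 \cdot 0 \cdot \rho)$ lies in $\alpha$, hence begins with $1$; since a reduced word of $\calQ$ carries all its barred letters at the front, $\rho'$ is entirely unbarred, and the shape of $\alpha$ plus the fact that $\rho'$ arises from a word of $1 \cdot 0 \cdot \varpi$ yield $\rho' \in 1 \cdot 0 \cdot \varpi_+$. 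If you wish to salvage your route, the repair is of the same nature: replace the per-factor invariant by a reduction-closed over-approximation (equivalently, a stack invariant stating that after any prefix of $1 \cdot 0 \cdot v$ compatible with $\calQ$, the reduced word is $1 \cdot 0 \cdot s$ where every maximal block of $0$'s in $s$ has even length), and only at the very end recover membership in $\varpi_+$ from the parity of the number of letters in $\{1,\ol{1}\}$, which is even on $\varpi$ and invariant under $\red$.
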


\begin{proof}
Consider the regular expression $\alpha =
1 \cdot (0 + \ol{0}) \cdot (0 \cdot 0 + \ol{0} \cdot 0 + 0 \cdot \ol{0} + \ol{0} \cdot \ol{0} + 1 + \ol{1})^\ast$.
It comes at once that $1 \cdot 0 \cdot \varpi \subseteq \alpha$.
Furthermore, for all words $\rho_1, \rho_2$ such that either
$\rho_1 \cdot 0 \cdot \ol{0} \cdot \rho_2$ or $\rho_1 \cdot 1 \cdot \ol{1} \cdot \rho_2$ belongs to $\alpha$,
the word $\rho_1 \cdot \rho_2$ belongs to $\alpha$ too, which means that $\alpha$ is closed under reduction.

Let $\rho$ be a word in $\varpi$ such that $1 \cdot 0 \cdot \rho \in \calQ$,
and let $\rho' = \red(1 \cdot 0 \cdot \rho)$.
Since $\rho$ belongs to $\alpha$, so does $\rho'$,
whose leftmost letter must then be $1$.
We remarked above that, since $\rho' \in \calQ$, no letter $0$ or $1$ in $\rho'$ can be followed by a letter $\ol{0}$ or $\ol{1}$.
It follows that $\rho'$ does not contain any letter $\ol{0}$ or $\ol{1}$, i.e. that $\rho' = 1 \cdot 0 \cdot \rho''$
for some reduced word $\rho''$. Then, $\rho''$ must be the reduction of $\rho$, and cannot contain letters 
$\ol{0}$ or $\ol{1}$, hence $\rho'' \in \varpi_+$, i.e. $\rho' \in 1 \cdot 0 \cdot \varpi_+$.
We prove similarly the relation $(\varpi \cdot \ol{0} \cdot \ol{1})
\cap \calQ \subred \varpi_- \cdot \ol{0} \cdot \ol{1}$. \qed
\end{proof}

This first result allows us to restate, in a more precise manner,
the properties 1--3 and 3' stated in the core of the paper
(as well as their variants for the inverse words $\varphi(\oell_1)$ and $\varphi(\oell_2)$).

\begin{lemma}
\label{lem:1-0-omega}
For all vertices $x, y \in V$, we have $\lambda(\calP_x) \subred \varpi$ and
\begin{xalignat*}4
  \lambda(\calP_{x,\ell_1,y}) & \subred \varpi \cott 1 \cott 1 \cott 0 \cott 0 \cott \omega_+ \cott 1 \cott 0 &
  \lambda(\calP_{x,\ell_2,y}) & \subred \varpi \cott 1 \cott \omega_+ \cott 0 \cott 0 \cott 1 \cott 1 \cott \omega_+ \cott 0 \\
  \lambda(\calP_{x,\oell_1,y}) & \subred \ol{0} \cott \ol{1} \cott \omega_- \cott \ol{0} \cott \ol{0} \cott \ol{1} \cott \ol{1} \cott \varpi &
  \lambda(\calP_{x,\oell_2,y}) & \subred \ol{0} \cott \omega_- \cott \ol{1} \cott \ol{1} \cott \ol{0} \cott \ol{0} \cott \omega_- \cott \ol{1} \cott \varpi
\end{xalignat*}
\end{lemma}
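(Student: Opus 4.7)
The plan is to prove each of the five inclusions by structural analysis of paths, using the observation that a nominal path lies entirely within a single gadget (visiting any other vertex of $V$ as an intermediate vertex would violate the nominality condition). For $\lambda(\calP_x) \subred \varpi$, a direct case analysis on the gadget type suffices: a path in $\calP_x$ uses only labels in $\{0, \ol{0}\}$, so in a gadget of type $(x, \ell_i, y)$ it is confined to the three vertices $x$, $(x, \ell_i, y, 1)$ and $(x, \ell_i, y, 2)$ (since the third letter of $\varphi(\ell_i)$ is $1$), and its label has the form $0 \cdot \ol{0}^{2k} \cdot 0$ for some $k \geq 0$, whose reduction lies in $\omega_+ \cup (\omega_- \cdot \ol{0} \cdot 0)$; in a gadget of type $(x, \oell_i, y)$, the path is confined to $\{x, (x, \oell_i, y, 1)\}$ (since the second letter of $\varphi(\oell_i)$ is $1$), with label $\ol{0} \cdot \ol{0} \in \omega_-$. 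In every case the reduction belongs to $\omega \subseteq \varpi$.

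For the four inclusions concerning $\calP_{x, \lambda, y}$, a path must traverse the entire gadget from $x$ to $y$. The approximate-Dyck condition combined with the structure of the locks forces the walk to cross each lock forward exactly once and prevents it from going back past a lock once crossed (doing so would produce the forbidden factor $1 \cdot \ol{0}$ or $0 \cdot \ol{1}$). I would decompose each such path at its lock-crossing edges into a pre-lock-1 prefix, the lock-1 crossing edge, a middle segment, the lock-2 crossing edge, and (for $\oell_i$-gadgets) a short post-lock-2 suffix; the pre-lock-1 prefix (and the post-lock-2 suffix, when present) are analyzed as in the $\calP_x$ case and contribute a factor in $\omega \subseteq \varpi$ to the reduction of the overall label.

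The main technical work is the analysis of the middle segment. My plan is to formalize properties 1--3 (for $\lambda \in \{\ell_1, \oell_1\}$) or 3' (for $\lambda \in \{\ell_2, \oell_2\}$) sketched informally in the main text: the label of the path, after an appropriate partial reduction, decomposes into alternating maximal same-class blocks whose parities match the pattern prescribed by $\varphi(\lambda)$ (e.g., for $\ell_1$: the last two blocks have odd length, the rest have even length, and the last block consists of $0$s only). I would prove this by induction on the walk length: the direct walk through the gadget realizes the base case, and each back-and-forth bounce in the middle segment inserts an irreducible same-class block of even length, preserving the parity pattern. The main obstacle will be carefully handling bounces close to the locks and within the first and last non-trivial blocks, to ensure each such bounce contributes only to the $\varpi$ prefix (or its analogue at the end) or to the $\omega_+$ (resp.\ $\omega_-$) factor of the target regex, rather than destroying the block-parity structure; this requires treating separately the four cases $\lambda \in \{\ell_1, \ell_2, \oell_1, \oell_2\}$ according to the placement of the ``different-class'' letters in $\varphi(\lambda)$.
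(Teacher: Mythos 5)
Your overall skeleton (a nominal path is confined to a single gadget; decompose around the locks; prefix and suffix are easy; the middle segment is the main work) matches the paper's proof, but there are two genuine gaps. First, your case analysis for $\calP_x$ is incomplete: a path in $\calP_x$ may also enter a gadget $(z,\lambda,x)$ of which $x$ is the \emph{sink}, through the last edge of that gadget. For instance, in a gadget $(z,\oell_1,x)$ the path $x \to (z,\oell_1,x,11) \to (z,\oell_1,x,10) \to (z,\oell_1,x,11) \to x$ belongs to $\calP_x$ and has label $\ol{0} \cdot 0 \cdot 0 \cdot \ol{0}$, which fits neither of the two shapes you list. The conclusion still holds, and the paper sidesteps the enumeration entirely: $\calG$ is bipartite, so every path in $\calP_x$ has even length, and any even-length word over $\{0,\ol{0}\}$ reduces to a word $\ol{0}^a \cdot 0^b$ with $a+b$ even, which always lies in $\omega \subseteq \varpi$.

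Second, and more seriously, your plan for the middle segment rests on a false local claim and omits the key mechanism. It is not true that ``each back-and-forth bounce inserts an irreducible same-class block of even length'': in an $\ell_1$-gadget, the bounce $(x,\ell_1,y,10) \to (x,\ell_1,y,11) \to (x,\ell_1,y,10)$ inserts $\ol{1} \cdot \ol{1}$, which cancels under reduction against the preceding letters $1 \cdot 1$ and \emph{merges} blocks rather than inserting one; note also that the raw label of this segment always ends with $\ol{1}$ (the only edge into vertex $11$ is labeled $\ol{1}$), while the reduced label must end with $1$, so cancellations across your block boundaries are unavoidable. Maintaining a parity invariant through such cancellations is precisely the ``main obstacle'' you defer, and the paper does not run this induction at all. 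Instead, it cuts the path at vertices $2$, $6$ and $11$, observes that no $\ol{0}$-labeled edge occurs in the middle region, and eliminates any surviving $\ol{1}$ in the reduced label by a forbidden-factor argument: in a reduced word of $\calQ$, the leftmost $\ol{1}$ can be preceded neither by $1$ (by reducedness) nor by $0$ (since $0 \cdot \ol{1} \notin \calQ$), so it would have to be the first letter of the segment; but the reduced label of the preceding segment ends with $0$, so $\red(\lambda(\pi))$ would contain the forbidden factor $0 \cdot \ol{1}$, contradicting $\lambda(\pi) \in \calQ$. This exploitation of $\calQ$-membership of the \emph{whole} path label, across the segment boundary, is the missing idea in your proposal; without it (or an invariant substantially more careful than the one you state), your induction on bounces does not go through. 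You also leave unverified the final regular-expression algebra, e.g.\ that $\varpi \cdot 1 \cdot 1 \cdot \omega_+ \cdot 0 \cdot 0 \cdot \omega_+ \cdot 1 \cdot 0$ collapses to the form stated in the lemma, though this is a minor point.
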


\begin{proof}
First, let $\pi$ be a path in $\calP_x$. By construction, the graph
$\calG$ is bipartite, hence $\pi$ is of even length.  Its edges are
labeled by $0$ and $\ol{0}$ only, hence the word $\lambda(\pi)$,
once reduced, belongs to $\omega$, and therefore to $\varpi$.

Then, let $\pi$ be a path in $\calP_{x,\ell_1,y}$. By construction, $\pi$ belongs to $\calP$,
hence its label $\lambda(\pi)$ belongs to $\calQ$, and $\pi$ cannot contain any doomed path.
Consequently, we can decompose $\pi$ as a concatenation of paths $\pi_1 \cdot \pi_2 \cdot \pi_3 \cdot \pi_4$, where:
\begin{itemize}
 \item $\pi_1$ goes from $x$ to $(x,\ell_1,y,2)$, with internal vertices in $\{(x,\ell_1,y,i) \mid 1 \leq i \leq 2\}$;
 \item $\pi_2$ goes from $(x,\ell_1,y,2)$ to $(x,\ell_1,y,6)$, with internal vertices in $\{(x,\ell_1,y,i) \mid 2 \leq i \leq 6\}$;
 \item $\pi_3$ goes from $(x,\ell_1,y,6)$ to $(x,\ell_1,y,11)$, with internal vertices in $\{(x,\ell_1,y,i) \mid 2 \leq i \leq 11\}$;
 \item $\pi_4$ is the one-letter path $(x,\ell_1,y,11) \to
   y$ (by definition of a nominal path, we stop at the first occurrence of $y$).
\end{itemize}
It comes at once that $\lambda(\pi_1) \in \varpi$, that $\lambda(\pi_2) \in 1 \cdot 1 \cdot \omega_+ \cdot 0 \cdot 0$ and that
$\lambda(\pi_4) = 0$. 
The word $\lambda(\pi_3)$ belongs a
priori to $\omega \cdot \ol{1}$, but we show that when reduced, it
actually contains no occurrence of letters $\ol{0}$ and
$\ol{1}$. The case of $\ol{0}$ is obvious.

Assume now that there is some
$\ol{1}$ in the reduced version of $\lambda(\pi_3)$.
The leftmost letter $\ol{1}$ cannot be preceded by a $1$ (since $\lambda(\pi_3)$ is reduced)
nor by a $0$ (since $0 \cdot \ol{1} \notin \calQ$). Hence, it must be the first letter of $\lambda(\pi_3)$.
However, since $0$ is the last letter of $\lambda(\pi_2)$, the concatenated word
$\lambda(\pi) = \lambda(\pi_1) \cdot \lambda(\pi_2) \cdot \lambda(\pi_3) \cdot \lambda(\pi_4)$
cannot belong to $\calQ$ itself, which is a contradiction. 
We conclude that the reduced version of
$\lambda(\pi_3)$ belongs to $\omega_+ \cdot 1$.
Hence, we have proved that 
\[\lambda(\calP_{x,\ell_1,y}) \subred \varpi \cdot 1 \cdot 1 \cdot \omega_+ \cdot 0 \cdot 0 \cdot \omega_+ \cdot 1 \cdot 0
= \varpi \cdot 1 \cdot 1 \cdot 0 \cdot 0 \cdot \omega_+ \cdot 1 \cdot 0.\]

The case of the language $\lambda(\calP_{x,\oell_1,y}) \cap \calQ$ is treated in the exact same manner,
and the two remaining cases are very similar as well. \qed 
\end{proof}

Using the two previous technical lemmas, we are able to show that
the set $\calQ$ of \emph{approximate Dyck words} allows to
discriminate between legal (for Dyck) and illegal consecutions of
nominal paths. Similarly, prefixes $\calQ_\init$ eliminate illegal
beginnings of Dyck paths. This is stated as follows.

\begin{lemma}
\label{lem:1-0-omega-3}
For all vertices $x, x', y, y' \in V$, we have
\begin{xalignat*}4
  & (\lambda(\calP_{x,\ell_1,y}) \cdot \varpi \cdot \lambda(\calP_{x',\oell_1,y'})) \cap \calQ \subred \varpi &
  & ~~~(\lambda(\calP_{x,\ell_2,y}) \cdot \varpi \cdot \lambda(\calP_{x',\oell_2,y'})) \cap \calQ \subred \varpi \\
  & (\lambda(\calP_{x,\ell_1,y}) \cdot \varpi \cdot \lambda(\calP_{x',\oell_2,y'})) \cap \calQ = \emptyset &
  & ~~~(\lambda(\calP_{x,\ell_2,y}) \cdot \varpi \cdot \lambda(\calP_{x',\oell_1,y'})) \cap \calQ = \emptyset \\
  & (\varpi \cdot \lambda(\calP_{x,\oell_1,y})) \cap \calQ_\init = \emptyset &
  & ~~~(\varpi \cdot \lambda(\calP_{x,\oell_2,y})) \cap \calQ_\init = \emptyset
\end{xalignat*}
\end{lemma}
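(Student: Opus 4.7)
The plan is to combine Lemmas~\ref{lem:1-0-omega} and~\ref{lem:1-0-omega-2} with two structural features of $\calQ$ and $\calQ_\init$ already recorded above: both are factor-closed, and membership depends only on the reduced word (since $\calQ$ is characterized by the absence of the forbidden factors $1 \cdot \ol{0}$, $0 \cdot \ol{1}$ in the reduction, and $\calQ_\init$ by the additional absence of any $\ol{0}, \ol{1}$ letters). This allows me to freely replace any $w$ in the set under analysis by a reduction-equivalent form in which the outer pieces $w_i \in \lambda(\calP_{x,\lambda,y})$ have been pre-reduced using Lemma~\ref{lem:1-0-omega}, and then to extract the decisive sub-factor around each junction.

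For the prefix-emptiness relations (the last two), I would assume $w = w_1 \cdot w_3 \in \calQ_\init$ with $w_1 \in \varpi$ and $w_3 \in \lambda(\calP_{x,\oell_i,y})$. Lemma~\ref{lem:1-0-omega} says $\red(w_3)$ admits a prefix $\mu \cdot \ol{0} \cdot \ol{1}$ with $\mu \in \omega_-$ (the empty word for $i=1$, an $\omega_-$-block for $i=2$). Since $\mu$ can be absorbed into the rightmost $\omega$-block of $\varpi$, the factor $w_1 \cdot \mu \cdot \ol{0} \cdot \ol{1}$ of $w_1 \cdot \red(w_3)$ lies in $\varpi \cdot \ol{0} \cdot \ol{1}$ and in $\calQ$; so Lemma~\ref{lem:1-0-omega-2} puts its reduction in $\varpi_- \cdot \ol{0} \cdot \ol{1}$, which contains the letter $\ol{1}$. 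Since every prefix of a word in $\calQ_\init$ itself reduces to a word over $\{0,1\}$, this contradicts $w \in \calQ_\init$.

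For the matched cases (goals (a) and (b)), take $w = w_1 \cdot w_2 \cdot w_3 \in \calQ$ and pass to the equivalent form $\red(w_1) \cdot w_2 \cdot \red(w_3)$. Lemma~\ref{lem:1-0-omega} makes $\red(w_1)$ end in $1 \cdot 0$ and $\red(w_3)$ begin with $\ol{0} \cdot \ol{1}$, so factor-closure puts both $1 \cdot 0 \cdot w_2$ and $w_2 \cdot \ol{0} \cdot \ol{1}$ into $\calQ$. Applying the two parts of Lemma~\ref{lem:1-0-omega-2} simultaneously yields $\red(1 \cdot 0 \cdot w_2) \in 1 \cdot 0 \cdot \varpi_+$ (only plain letters) and $\red(w_2 \cdot \ol{0} \cdot \ol{1}) \in \varpi_- \cdot \ol{0} \cdot \ol{1}$ (only barred letters); working in the free group on $\{0,1\}$, these two conditions are compatible only when $\red(w_2) = \varepsilon$. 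The reduction of $w$ then coincides with that of $u_1 \cdot 1 \cdot 1 \cdot 0 \cdot 0 \cdot \alpha \cdot \beta \cdot \ol{0} \cdot \ol{0} \cdot \ol{1} \cdot \ol{1} \cdot v_3$ with $u_1, v_3 \in \varpi$, $\alpha \in \omega_+$, $\beta \in \omega_-$; forbidding factors $1 \cdot \ol{0}$ and $0 \cdot \ol{1}$ forces $\alpha \cdot \beta$ to telescope inside the surrounding $0^2 \ldots \ol{0}^2$ down to an element of $\omega$, after which the outer $1 \cdot 1$ and $\ol{1} \cdot \ol{1}$ cancel, leaving a word in $u_1 \cdot \omega \cdot v_3 \subseteq \varpi$. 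For the mismatched cases (goals (c) and (d)), the same argument up through $\red(w_2) = \varepsilon$ is still valid, but now $\alpha$ and $\beta$ arise from encodings of \emph{different} letters; the parity distinction between properties~3 and~3' of $\varphi(\ell_1)$ versus $\varphi(\ell_2)$ means the block shapes no longer allow a complete telescopic cancellation, leaving an unmatched letter adjacent to an oppositely-signed block and hence a forbidden factor $1 \cdot \ol{0}$ or $0 \cdot \ol{1}$ in $\red(w)$.

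The main obstacle I foresee is the matched case, in two places: first, justifying that the two simultaneous constraints on $w_2$ actually force $\red(w_2) = \varepsilon$, and second, executing the telescope $0 \cdot 0 \cdot \alpha \cdot \beta \cdot \ol{0} \cdot \ol{0} \to \omega$ under the no-forbidden-factor constraint. Both reduce to finite case analyses in the free group on $\{0,1\}$, combined with bookkeeping over the block decompositions of $\alpha \in \omega_+$ and $\beta \in \omega_-$; this is consistent with the paper's description of the remaining proofs as ``technical yet simple.''
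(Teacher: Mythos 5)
Your overall architecture matches the paper's proof: reduce the outer components via Lemma~\ref{lem:1-0-omega}, extract the junction word and cancel it using both halves of Lemma~\ref{lem:1-0-omega-2} together with the disjointness of $\varpi_+$ and $\varpi_-$, conclude $\subred \varpi$ in the matched cases, use the parity distinction between properties~3 and~3' to produce a forbidden factor in the mismatched cases, and derive a barred letter to kill the $\calQ_\init$ cases. Your prefix-extraction variant for the last two relations is sound (the prefix up to the first $\ol{1}$ of $\red(w_3)$ is indeed of the form $\ol{0}^{2c} \cdot \ol{0} \cdot \ol{1}$, so your $\mu \in \omega_-$ exists) and is a mild restructuring of the paper's own step $(\varpi \cdot \lambda(\calP_{x,\oell_i,y})) \cap \calQ \subred \varpi_- \cdot \lambda(\calP_{x,\oell_i,y})$.

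Two steps fail as written, however. First, you work ``in the free group on $\{0,1\}$'', but the paper's reduction is one-sided: only the factors $0 \cdot \ol{0}$ and $1 \cdot \ol{1}$ are deleted, and $\ol{0} \cdot 0$ is irreducible --- this is exactly why $\ol{0} \cdot 0$ appears as a generator of $\omega$, and why inserting $\ol{0} \cdot 0$ does not preserve Dyckness. Consequently your inference that the two constraints force $\red(w_2) = \varepsilon$ is false: for $w_2 = \ol{0} \cdot 0$ one has $\red(1 \cdot 0 \cdot w_2) = 1 \cdot 0 \in 1 \cdot 0 \cdot \varpi_+$ and $\red(w_2 \cdot \ol{0} \cdot \ol{1}) = \ol{0} \cdot \ol{1} \in \varpi_- \cdot \ol{0} \cdot \ol{1}$, yet $\red(w_2) = \ol{0} \cdot 0 \neq \varepsilon$. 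The correct conclusion --- and the one the paper draws --- is that the entire junction word $1 \cdot 0 \cdot w_2 \cdot \ol{0} \cdot \ol{1}$ reduces to $\varepsilon$; by confluence of the rewriting this is all your next sentence actually needs, so the slip is reparable, but the free-group framing must go. Second, your premise that $\red(w_1)$ ends in $1 \cdot 0$ and $\red(w_3)$ begins with $\ol{0} \cdot \ol{1}$ holds only for $\ell_1$ and $\oell_1$: for $\ell_2$, Lemma~\ref{lem:1-0-omega} gives a shape ending in $1 \cdot 1 \cdot \omega_+ \cdot 0$, whose last two letters may be $0 \cdot 0$. The paper repairs this by regrouping into $\cdots 1 \cdot 1 \cdot 0^{2a+1}$ (dually $\ol{0}^{2b+1} \cdot \ol{1} \cdot \ol{1} \cdots$ for $\oell_2$) and cancelling the junction $1 \cdot 0^{2a+1} \cdot w_2 \cdot \ol{0}^{2b+1} \cdot \ol{1}$, the even powers being absorbed into adjacent $\varpi$-blocks before Lemma~\ref{lem:1-0-omega-2} is invoked. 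Your ``block bookkeeping'' remark gestures at this, but your case~(b) and the mismatched cases need that regrouping explicitly: it is precisely where the trapped odd power in the paper's parity punchline, namely that $0 \cdot 1^{2a} \cdot \ol{1}^{2b+1} \cdot \ol{0}$ reduces to a word with a forbidden factor and hence lies outside $\calQ$, comes from.
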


\begin{proof}
First, let $\rho_1$, $\rho_2$ and $\rho_3$ be reduced words in $\lambda(\calP_{x,\ell_1,y})$, $\varpi$ and $\lambda(\calP_{x',\oell_1,y'})$
such that $\rho_1 \cdot \rho_2 \cdot \rho_3$ belongs to $\calQ$.
Lemma~\ref{lem:1-0-omega} proves that $\rho_1$ ends with the suffix $1 \cdot 0$, and that
$\rho_3$ begins with the prefix $\ol{0} \cdot \ol{1}$.

Hence, the word $\rho'_2 = 1 \cdot 0 \cdot \rho_2 \cdot \ol{0} \cdot \ol{1}$
belongs to $\calQ$, and therefore Lemma~\ref{lem:1-0-omega-2} proves that $\rho'_2$ belongs to both
$1 \cdot 0 \cdot \varpi_+ \cdot \ol{0} \cdot \ol{1}$ and $1 \cdot 0 \cdot \varpi_- \cdot \ol{0} \cdot \ol{1}$.
Since $(1 \cdot 0 \cdot \varpi_+ \cdot \ol{0} \cdot \ol{1}) \cap (1 \cdot 0 \cdot \varpi_- \cdot \ol{0} \cdot \ol{1}) \subred 1 \cdot 0 \cdot \ol{0} \cdot \ol{1} \subred \varepsilon$,
it follows that $\rho_1 \cdot \rho_2 \cdot \rho_3$, once reduced, belongs to the set
$\varpi \cdot 1 \cdot 1 \cdot 0 \cdot 0 \cdot \omega_+ \cdot \varepsilon
\cdot \omega_- \cdot \ol{0} \cdot \ol{0} \cdot \ol{1} \cdot \ol{1} \cdot \varpi$, i.e. to $\varpi$ itself.

We prove the relation $(\lambda(\calP_{x,\ell_2,y}) \cdot \varpi \cdot \lambda(\calP_{x',\oell_2,y'})) \cap \calQ \subred \varpi$
in a similar way.
Let $\rho_1$, $\rho_2$ and $\rho_3$ be reduced words in $\lambda(\calP_{x,\ell_2,y})$, $\varpi$ and $\lambda(\calP_{x',\oell_2,y'})$
such that $\rho_1 \cdot \rho_2 \cdot \rho_3$ belongs to $\calQ$.
Lemma~\ref{lem:1-0-omega} proves that $\rho_1$ belongs to $\varpi \cdot 1 \cdot \omega_+ \cdot 0 \cdot 0 \cdot \omega_+ \cdot 1 \cdot 1 \cdot 0^{2a+1}$
and that $\rho_3$ belongs to $\ol{0}^{2b+1} \cdot \ol{1} \cdot \ol{1} \cdot \omega_- \cdot \ol{0} \cdot \ol{0} \cdot \omega_- \cdot \ol{1} \cdot \varpi$
for some integers $a, b \geq 0$.

Once again, Lemma~\ref{lem:1-0-omega-2} proves that the word $\rho'_2 = 1 \cdot 0^{2a+1} \cdot \rho_2 \cdot \ol{0}^{2b+1} \cdot \ol{1}$
is reducible to the empty word,
which shows that $\rho_1 \cdot \rho_2 \cdot \rho_3$, once reduced, belongs to
$\varpi \cdot 1 \cdot \omega_+ \cdot 0 \cdot 0 \cdot \omega_+ \cdot 1 \cdot \varepsilon \cdot
\ol{1} \cdot \omega_- \cdot \ol{0} \cdot \ol{0} \cdot \omega_- \cdot \ol{1} \cdot \varpi$,
i.e. to $\varpi$ itself.

Second, assume that there exists reduced words $\rho_1$, $\rho_2$ and $\rho_3$
in $\lambda(\calP_{x,\ell_1,y})$, $\varpi$ and $\lambda(\calP_{x',\oell_2,y'})$
such that $\rho_1 \cdot \rho_2 \cdot \rho_3$ belongs to $\calQ$.
The same arguments as above show that $\rho_1 \cdot \rho_2 \cdot \rho_3$, once reduced, belongs to
$\varpi \cdot 1 \cdot 1 \cdot 0 \cdot 0 \cdot \omega_+ \cdot \varepsilon \cdot
\ol{1} \cdot \omega_- \cdot \ol{0} \cdot \ol{0} \cdot \omega_- \cdot \ol{1} \cdot \varpi$,
i.e. that there exists integers $a,b \geq 0$ such that
$\red(\rho_1 \cdot \rho_2 \cdot \rho_3)$ belongs to
$\varpi \cdot 1 \cdot 1 \cdot \omega_+ \cdot 0 \cdot 0 \cdot 1^{2a} \cdot \varepsilon \cdot
\ol{1}^{2b+1} \cdot \ol{0} \cdot \ol{0} \cdot \omega_- \cdot \omega_- \cdot \ol{1} \cdot \varpi$.

Every word of the form $0 \cdot 1^{2a} \cdot \ol{1}^{2b+1} \cdot \ol{0}$ is reducible to a word
$0 \cdot 1^{2x+1} \cdot \ol{0}$ or $0 \cdot \ol{1}^{2x+1} \cdot \ol{0}$, hence it cannot belong to $\calQ$.
This proves that $\rho_1 \cdot \rho_2 \cdot \rho_3$ cannot belong to $\calQ$ as well.

We prove in the exact same manner the equality
$(\lambda(\calP_{x,\ell_2,y}) \cdot \varpi \cdot \lambda(\calP_{x',\oell_1,y'})) \cap \calQ = \emptyset$.

Finally, the same arguments also show that
$(\varpi \cdot \lambda(\calP_{x,\oell_1,y})) \cap Q \subred \varpi_- \cdot \lambda(\calP_{x,\oell_1,y})$ and that
$(\varpi \cdot \lambda(\calP_{x,\oell_2,y})) \cap Q \subred \varpi_- \cdot \lambda(\calP_{x,\oell_2,y})$.
However, no word in $\calQ_\init$ begins with a letter $\ol{0}$ or $\ol{1}$,
while every word in $\varpi_-$ (if that word is non-empty), in $\lambda(\calP_{x,\oell_1,y})$ or
in $\lambda(\calP_{x,\oell_2,y})$ begins with such a letter.
Hence, both$(\varpi \cdot \lambda(\calP_{x,\oell_1,y})) \cap Q$ and
$(\varpi \cdot \lambda(\calP_{x,\oell_2,y})) \cap Q$ are empty. \qed
\end{proof}

This allows to state the partial correctness of the construction, as stated
in Lemma~\ref{lem:q-init-dyck-init} in the core of the paper,
and which we reproduce here.

\setcounter{lemma}{3}
\setcounter{mycount}{3}

\begin{lemma}
  Let $\rho$ be a path in $G$.
  If $\rho$ is not an approximate Dyck path, then
  $\lambda(\ol{\calP}_\rho) \cap \calQ = \emptyset$,
  and if $\lambda(\rho)$ is not a prefix of a Dyck word,
  then $\lambda(\ol{\calP}_\rho) \cap \calQ_\init = \emptyset$.
\end{lemma}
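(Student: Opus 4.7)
The plan is to establish both statements by strong induction on the length $k$ of $\rho$, proving in parallel an auxiliary claim \textbf{(C)}: if $\lambda(\rho)$ is a Dyck word over $L$, then every $\pi \in \ol{\calP}_\rho$ with $\lambda(\pi) \in \calQ$ satisfies $\red(\lambda(\pi)) \in \varpi$. This claim is needed to absorb the balanced sub-word portions of $\lambda(\rho)$ that appear in the main arguments. Throughout, I decompose each $\pi \in \ol{\calP}_\rho$ as $\sigma_0 \cdot \psi_1 \cdot \sigma_1 \cdots \psi_k \cdot \sigma_k$, where $\sigma_0 \in \calP_{v_1}^\ast$, $\sigma_i \in \calP_{w_i}^\ast$ for $1 \leq i \leq k$, and $\psi_i \in \calP_{v_i, \lambda_i, w_i}$; by Lemma~\ref{lem:1-0-omega}, the reduction of each $\lambda(\sigma_i)$ lies in $\varpi$.

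To prove \textbf{(C)}, I exploit the standard recursive decomposition of a non-empty Dyck word: $\lambda(\rho) = \ell_m \cdot \tau \cdot \oell_m \cdot \upsilon$ with $\tau$ and $\upsilon$ both Dyck. This lifts to a factorization of $\rho$ as a concatenation $e \cdot \rho_\tau \cdot e' \cdot \rho_\upsilon$, where $e, e'$ are single edges labeled by $\ell_m$ and $\oell_m$. Since any path in $\ol{\calP}_{\rho_1 \cdot \rho_2}$ splits uniquely as a concatenation of a path in $\ol{\calP}_{\rho_1}$ and one in $\ol{\calP}_{\rho_2}$ (two consecutive $\calP_w^\ast$ segments coalesce), every $\pi \in \ol{\calP}_\rho$ decomposes as $\pi = \pi_{\mathrm{init}} \cdot \psi \cdot \pi_\tau \cdot \psi' \cdot \pi_\upsilon$ with $\pi_\tau \in \ol{\calP}_{\rho_\tau}$ and $\pi_\upsilon \in \ol{\calP}_{\rho_\upsilon}$. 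The induction hypothesis applied to $\rho_\tau$ gives $\red(\lambda(\pi_\tau)) \in \varpi$; the first line of Lemma~\ref{lem:1-0-omega-3} then yields $\red(\lambda(\pi_{\mathrm{init}} \cdot \psi \cdot \pi_\tau \cdot \psi')) \in \varpi$, and the induction hypothesis on $\rho_\upsilon$ finishes the step.

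For statement (A), suppose for contradiction that some $\pi \in \ol{\calP}_\rho$ has $\lambda(\pi) \in \calQ$ while $\rho$ is not approximate Dyck. The normal form of $\lambda(\rho)$ under the rewriting $\ell_i \cdot \oell_i \to \varepsilon$ then contains a factor $\ell_n \cdot \oell_m$ with $n \neq m$; picking the earliest such mismatch together with the position at which the offending $\ell_n$ was pushed, I obtain indices $i < j$ with $\lambda_i = \ell_n$, $\lambda_j = \oell_m$, and $\lambda_{i+1} \cdots \lambda_{j-1}$ a Dyck word. Claim \textbf{(C)}, applied to this middle sub-path, yields $\red(\lambda(\sigma_i \cdot \psi_{i+1} \cdots \psi_{j-1} \cdot \sigma_{j-1})) \in \varpi$. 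Consequently, the label of $\psi_i$ followed by this middle factor followed by $\psi_j$ reduces into $\lambda(\calP_{v_i, \ell_n, w_i}) \cdot \varpi \cdot \lambda(\calP_{v_j, \oell_m, w_j})$, whose intersection with $\calQ$ is empty by the second line of Lemma~\ref{lem:1-0-omega-3}. Factor-closedness of $\calQ$ then contradicts $\lambda(\pi) \in \calQ$.

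Statement (B) follows by an analogous argument. If $\lambda(\rho)$ is not a prefix of a Dyck word, then either the scan exhibits a mismatch of the kind in (A)---handled identically, since $\calQ_\init \subseteq \calQ$---or there is some $j$ with $\lambda_j \in \{\oell_1, \oell_2\}$ and $\lambda_1 \cdots \lambda_{j-1}$ a Dyck word. In the latter case, \textbf{(C)} gives $\red(\lambda(\sigma_0 \cdot \psi_1 \cdots \psi_{j-1} \cdot \sigma_{j-1})) \in \varpi$, whence the label of the prefix of $\pi$ ending at $\psi_j$ reduces into $\varpi \cdot \lambda(\calP_{v_j, \lambda_j, w_j})$; the third line of Lemma~\ref{lem:1-0-omega-3} shows that this set's intersection with $\calQ_\init$ is empty, and prefix-closedness of $\calQ_\init$ delivers the contradiction. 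The main obstacle I anticipate lies in \textbf{(C)}---specifically, verifying $\red(\varpi \cdot \varpi) \subseteq \varpi$ (needed when concatenating reductions of the two sub-pieces) and ensuring that the recursive Dyck decomposition of $\lambda(\rho)$ lifts cleanly to the level of paths in $\calG$.
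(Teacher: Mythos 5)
Your proposal is correct and follows essentially the same route as the paper's own proof: your claim \textbf{(C)} is exactly the paper's auxiliary statement $\lambda(\ol{\calP}_\rho) \cap \calQ \subred \varpi$ for Dyck $\rho$, proved by the same induction via Lemma~\ref{lem:1-0-omega} and the first line of Lemma~\ref{lem:1-0-omega-3}, your mismatch case in (A) coincides with the paper's minimal non-approximate-Dyck sub-path of the form edge--Dyck--edge, and your dichotomy in (B) matches the paper's maximal-Dyck-prefix argument using the $\calQ_\init$ lines of Lemma~\ref{lem:1-0-omega-3}. The only deviations are cosmetic (the first-return factorization $\ell_m \cdot \tau \cdot \oell_m \cdot \upsilon$ in place of the paper's two-case split of a Dyck path, and your explicit flagging of $\red(\varpi \cdot \varpi) \subseteq \varpi$, which the paper silently absorbs into $\varpi^\ast \cdot \varpi \cdot \varpi^\ast = \varpi$).
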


\begin{proof}
We begin with an auxiliary result: we first prove that $\lambda(\ol{\calP}_\rho) \cap \calQ \subred \varpi$ if
$\rho$ is a Dyck path, and proceed by induction on the length $|\rho|$ of the path $\rho$.
If $|\rho| = 0$, then $\lambda(\ol{\calP}_\rho) \subseteq \lambda(\calP_x)^\ast \subseteq \varpi^\ast = \varpi$,
where $x$ is the source and sink of $\rho$.
Then, if $|\rho| \geq 1$ and if $\rho$ is a concatenation of two non-empty Dyck paths $\rho_1$ and $\rho_2$, we have
\[\lambda(\ol{\calP}_\rho) \cap \calQ \subred (\lambda(\ol{\calP}_{\rho_1}) \cap \calQ) \cdot (\lambda(\ol{\calP}_{\rho_2}) \cap \calQ) \subred \varpi \cdot \varpi = \varpi.\]
Finally, if $|\rho| \geq 1$ and if $\rho$ is a concatenation of an edge
$(x_1,\theta,x_2)$, a Dyck path $\rho'$ with source $x_2$ and sink $y_1$, and an edge $(y_1,\ol{\theta},y_2)$ (with $\theta \in \{\ell_1,\ell_2\}$), then
\begin{equation*}
  \begin{split}
  \lambda(\ol{\calP}_\rho) \cap \calQ \subred ~& \lambda(\calP_{x_1})^\ast \cdot
      ((\lambda(\calP_{x_1,\theta,x_2}) \cdot (\lambda(\ol{\calP}_{\rho'}) \cap \calQ) \cdot
      \lambda(\calP_{y_1,\ol{\theta},y_2})) \cap \calQ) \cdot \lambda(\calP_{y_2})^\ast \\
  \subred ~& \varpi^\ast \cdot ((\lambda(\calP_{x_1,\theta,x_2}) \cdot \varpi \cdot
      \lambda(\calP_{y_1,\ol{\theta},y_2})) \cap \calQ) \cdot \varpi^\ast \\
  \subred ~& \varpi^\ast \cdot \varpi \cdot \varpi^\ast = \varpi.
  \end{split}
\end{equation*}

Then, we prove that $\lambda(\ol{\calP}_\rho) \cap \calQ = \emptyset$ if
$\rho$ is not an approximate Dyck path. Indeed, let $\rho'$ be a minimal sub-path of $\rho'$
that is not approximate Dyck. The word $\rho'$ cannot be empty, hence it must be of the form
$\rho' = (x_1,\theta,x_2) \cdot \rho'' \cdot (y_1,\chi,y_2)$, where $\rho''$ is a Dyck path and either
$\theta = \ell_1$ and $\chi = \oell_2$ or $\theta = \ell_2$ and $\ol\chi = \oell_1$.
It follows that
\begin{equation*}
  \begin{split}
  \lambda(\ol{\calP}_{\rho'}) \cap \calQ \subred ~& \lambda(\calP_{x_1})^\ast \cdot
      ((\lambda(\calP_{x_1,\theta,x_2}) \cdot (\lambda(\ol{\calP}_{\rho''}) \cap \calQ) \cdot
      \lambda(\calP_{y_1,\chi,y_2})) \cap \calQ) \cdot \lambda(\calP_{y_2})^\ast \\
  \subred ~& \varpi^\ast \cdot ((\lambda(\calP_{x_1,\theta,x_2}) \cdot \varpi \cdot
      \lambda(\calP_{y_1,\chi,y_2})) \cap \calQ) \cdot \varpi^\ast \\
  \subred ~& \varpi^\ast \cdot \emptyset \cdot \varpi^\ast = \emptyset,
  \end{split}
\end{equation*}
and therefore that $\lambda(\ol{\calP}_\rho) \cap \calQ = \emptyset$ as well.

Finally, if $\lambda(\rho)$ is not a prefix of a Dyck word,
then the relation $\lambda(\ol{\calP}_\rho) \cap \calQ_\init = \emptyset$
is immediate when $\rho$ itself is not an approximate Dyck path (since
$\calQ_\init \subseteq \calQ$).
Hence, we assume that $\rho$ is an approximate Dyck path.
Let $\rho_1$ be the longest prefix of $\rho$ such that
$\lambda(\rho_1)$ is a prefix of a Dyck word.
Let us write $\rho$ as a path of the form $\rho = \rho_1 \cdot (x,\theta,y) \cdot \rho_2$,
where $(x,\theta,y)$ is an edge of $G$.

Let $\Lambda$ be the reduction of $\lambda(\rho_1)$.
If $\theta \in \{\ell_1,\ell_2\}$, then the word $\lambda(\rho_1) \cdot \theta \cdot \ol{\theta}$
reduces to $\Lambda$, hence is a prefix of a Dyck word, contradicting the maximality of $\rho_1$.
Hence, we know that $\theta \in \{\oell_1,\oell_2\}$.

Now, let us assume that $\Lambda$ is not the empty word.
Since $\lambda(\rho_1)$ is a prefix of a Dyck word, $\Lambda$ must contain only letters
$\ell_1$ and $\ell_2$. Hence, without loss of generality, we assume that the rightmost
letter of $\Lambda$ is $\ell_1$.
If $\theta = \oell_1$, then the word $\lambda(\rho_1) \cdot \theta \cdot \ell_1$
reduces to $\Lambda$ too, which is impossible.
Finally, if $\theta = \oell_2$, then $\lambda(\rho_1) \cdot \theta$ reduces to
$\Lambda \cdot \oell_2$, which is not an approximate Dyck word,
contradicting the definition of $\rho$.

Hence, $\Lambda$ is the empty word, i.e. $\rho_1$ is a Dyck path. It follows that
\begin{equation*}
  \begin{split}
  \lambda(\ol{\calP}_{\rho_1 \cdot (x,\theta,y)}) \cap \calQ_\init \subred ~& (((\lambda(\ol{\calP}_{\rho_1}) \cap \calQ) \cdot
      \lambda(\calP_{x,\theta,y})) \cap \calQ_\init) \cdot \lambda(\calP_y)^\ast \\
  \subred ~& ((\varpi \cdot \lambda(\calP_{x,\theta,y})) \cap \calQ_\init) \cdot \varpi^\ast \\
  \subred ~& \emptyset \cdot \varpi^\ast = \emptyset,
  \end{split}
\end{equation*}
and therefore that $\lambda(\ol{\calP}_\rho) \cap \calQ_\init = \emptyset$ as well. \qed
\end{proof}

The above auxiliary lemmas also lead to the following,
more direct proof of Lemma~\ref{lem:counter-paths-core},
which we also reproduce here.

\begin{lemma}
Let $\pi$ be a Dyck path from $s$ to $t$ in $\calG$,
let $\rho$ be the nominal ancestor of $\pi$,
and let $\lambda(\rho) \in L^\ast$ be the label of $\rho$.
In addition, let $\mu : L^\ast \mapsto \bbZ$ be the morphism of monoids defined by
$\mu(\ell_1) = \mu(\ell_2) = 1$ and $\mu(\oell_1) = \mu(\oell_2) = -1$.
Then, we have $\mu(\lambda(\rho)) = 0$.
\end{lemma}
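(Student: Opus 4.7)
The plan is to recast the problem as a well-matched bracket (Dyck) structure on the nominal decomposition $\pi = \pi_1 \cdots \pi_k$ from Definition~\ref{dfn:original-path}. By construction, $\lambda(\rho)$ is the concatenation of the labels $f_\pi(i)$ for $i \in \dom(f_\pi)$: a piece $\pi_i$ with $i \in \dom(f_\pi)$ either belongs to some $\calP_{u,\ell_j,v}$, contributing $+1$ to $\mu(\lambda(\rho))$, or to some $\calP_{u,\oell_j,v}$, contributing $-1$; pieces in some $\calP_x$ contribute nothing. Writing $p_+$ and $p_-$ for the respective counts of $\ell$-pieces and $\oell$-pieces, we have $\mu(\lambda(\rho)) = p_+ - p_-$, so it suffices to prove $p_+ = p_-$.

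The idea is to exhibit a Dyck matching where each $\ell_j$-piece plays the role of an opening bracket of type $j$ and each $\oell_j$-piece a closing bracket of type $j$. Using Lemma~\ref{lem:1-0-omega}, abbreviate by $O_j$ the reduced shape of $\lambda(\calP_{u,\ell_j,v})$, and by $C_j$ that of $\lambda(\calP_{u,\oell_j,v})$. Reading the pieces from left to right while maintaining a stack of unmatched opening types, the inductive invariant I would establish is: after processing $\pi_1, \ldots, \pi_l$, the reduced label $\red(\lambda(\pi_1 \cdots \pi_l))$ lies in $\varpi \cdot O_{j_1} \cdot \varpi \cdot O_{j_2} \cdots O_{j_s} \cdot \varpi$, where $[j_1, \ldots, j_s]$ is the current stack content (with $\varepsilon \in \varpi$ covering the seed case $s=0$).

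The inductive step leverages Lemma~\ref{lem:1-0-omega-3} together with the closure property $\varpi \cdot \varpi \subred \varpi$. A $\calP_x$ piece leaves the stack unchanged. An $\ell_j$-piece appends an $O_j$ factor, pushing $j$ on the stack. For an $\oell_j$-piece, the clauses of Lemma~\ref{lem:1-0-omega-3} rule out mismatches: if the stack is empty, the ``$\varpi \cdot C_j \cap \calQ_\init = \emptyset$'' clause contradicts the fact that $\lambda(\pi_1 \cdots \pi_l)$ is a prefix of the Dyck word $\lambda(\pi)$; if the top of the stack is of type $j' \neq j$, the ``$O_{j'} \cdot \varpi \cdot C_j \cap \calQ = \emptyset$'' clause contradicts $\lambda(\pi) \in \calQ$; a type match reduces $O_j \cdot \varpi \cdot C_j$ back into $\varpi$, consistent with popping the stack.

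Finally, since $\pi$ is Dyck we have $\red(\lambda(\pi)) = \varepsilon$, but every $O_j$ contains a letter $1$ which no $\varpi$-filler alone can cancel, so the final stack must be empty. This yields $p_+ = p_-$ and hence $\mu(\lambda(\rho)) = 0$. The main technical obstacle will be justifying the closure identities (in particular $\varpi \cdot \varpi \subred \varpi$ and the confluence between piecewise and global reductions), so that the local reduction identities announced by Lemma~\ref{lem:1-0-omega-3} legitimately compose across an arbitrary sequence of nominal pieces with intervening $\calP_x$ fillers; once this is in place, the remainder is routine stack bookkeeping.
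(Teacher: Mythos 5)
Your proof is correct in outline, but it takes a genuinely different route from the paper, which disposes of this lemma by a short algebraic computation rather than by bracket matching. The paper identifies $\calL^\ast$ with a submonoid of the free group $\bbZ \ast \bbZ$ and composes with the canonical projection $\Theta$ onto $\bbZ_2 \ast \bbZ_2$: modulo $2$, all of $\omega_+$, $\omega_-$, $\omega$, $\varpi_+$, $\varpi_-$ and $\varpi$ collapse to the identity, so Lemma~\ref{lem:1-0-omega} alone yields $\Theta(\lambda(\calP_x)) = \{\gamma^0\}$, $\Theta(\lambda(\calP_{x,\ell_j,y})) = \{\gamma\}$ and $\Theta(\lambda(\calP_{x,\oell_j,y})) = \{\gamma^{-1}\}$ for the non-torsion element $\gamma = (0,1) \cdot (1,0)$; since $\lambda(\pi)$ is Dyck it is the identity of the free group, whence $\gamma^{\mu(\lambda(\rho))} = \gamma^0$ and $\mu(\lambda(\rho)) = 0$. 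This needs no stack, no appeal to Lemma~\ref{lem:1-0-omega-3}, and no confluence bookkeeping, because a group morphism is automatically invariant under reduction. Your invariant, by contrast, proves the strictly stronger statement that the sequence $(f_\pi(i))_{i \in \dom(f_\pi)}$ is well-nested, i.e.\ that $\lambda(\rho)$ is itself a Dyck word; this subsumes not only the present lemma but also Lemma~\ref{lem:q-init-dyck-init} and the concluding step of Proposition~\ref{pro:dyck-to-undirected-dyck} --- indeed your left-to-right stack induction is essentially a reorganization of the structural induction by which the paper proves Lemma~\ref{lem:q-init-dyck-init}. That is a legitimate alternative division of labor, but it is more expensive, and two of the technicalities you flag deserve emphasis: first, the benign ones do hold ($\varpi \cdot \varpi \subseteq \varpi$ is true literally as regular languages because $\omega$ is a Kleene star, and $\calQ$ and $\calQ_{\init}$ are stable under replacing a factor by its reduction); second, and more seriously, your invariant as stated --- membership of the reduced prefix in $\varpi \cdot O_{j_1} \cdot \varpi \cdots O_{j_s} \cdot \varpi$ --- is too weak to fire Lemma~\ref{lem:1-0-omega-3} at a pop, since cancellation at junctions can blur that factorization; you should instead maintain the per-segment version, namely that the reduced label of the run since each unmatched push lies in $\varpi$, and then apply Lemma~\ref{lem:1-0-omega-3} to the genuine factor of $\lambda(\pi)$ beginning at the piece that pushed the top symbol (that factor lies in $\calQ$ because $\lambda(\pi)$ is Dyck, and the empty-stack case lies in $\calQ_{\init}$ because it is a prefix). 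With that strengthening your final step is sound: if the stack were non-empty, every word of the invariant language would contain a letter $1$, contradicting $\red(\lambda(\pi)) = \varepsilon$, so the pieces are perfectly matched and $\mu(\lambda(\rho)) = 0$ follows.
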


\begin{proof}
Let us identify the monoid $\calL^\ast$ with a sub-monoid of the free group $\bbZ \ast \bbZ$,
where the letters $0$, $\ol{0}$, $1$ and $\ol{1}$
are identified with the elements $(1,0)$, $(-1,0)$, $(0,1)$ and $(0,-1)$ respectively.
Then, let $\Theta$ be the canonical projection of the free group $\bbZ \ast \bbZ$ onto the free group $\bbZ_2 \ast \bbZ_2$.
Let $\gamma$ be the element $(0,1) \cdot (1,0)$ of $\bbZ_2 \ast \bbZ_2$.
Observe that $\gamma$ is not a torsion element of $\bbZ_2 \ast \bbZ_2$, and therefore that
the group $\{\gamma^i \mid i \in \bbZ\}$ is isomorphic with $\bbZ$.

It comes at once that $\Theta$ maps all of the languages $\omega_+$, $\omega_-$, $\omega$, $\varpi_+$, $\varpi_-$ and $\varpi$
to the singleton set $\{\gamma^0\}$. It follows that
$\Theta(\lambda(\calP_x)) = \{\gamma^0\}$, that $\Theta(\lambda(\calP_{x,\ell_1,y})) = \Theta(\lambda(\calP_{x,\ell_2,y})) = \{\gamma\}$ and that
$\Theta(\lambda(\calP_{x,\oell_1,y})) = \Theta(\lambda(\calP_{x,\oell_2,y})) = \{\gamma^{-1}\}$
for all $x,y \in V$.

Hence, let $v_0,\ldots,v_k$, $\pi_1,\ldots,\pi_k$ and $f_\pi : \{1,\ldots,k\} \mapsto L$
be the nominal vertex sequence, the nominal decomposition and the nominal label mapping of $\pi$.
Observe that, as an element of $\bbZ \ast \bbZ$,
$\lambda(\pi) = \lambda(\pi_1) \cdot \ldots \cdot \lambda(\pi_k)$ is the neutral element of the group,
whence $\gamma^0 = \Theta(\lambda(\pi_1)) \cdot \ldots \cdot \Theta(\lambda(\pi_k))$.

We showed above that $\Theta(\lambda(\pi_i)) = \gamma^0$ for all $i \in \{1,\ldots,k\} \setminus \dom(f_\pi)$
and that $\Theta(\lambda(\pi_i)) = \gamma^{\mu(f_\pi(i))}$ for all $i \in \dom(f_\pi)$,
which completes the proof. \qed
\end{proof}

\end{document}